\newcommand{\beq}{\begin{equation}}
\newcommand{\eeq}{\end{equation}}
\newcommand{\beqa}{\begin{eqnarray}}
\newcommand{\eeqa}{\end{eqnarray}}
\newtheorem{theorem}{Theorem}[section]
\newtheorem{lemma}[theorem]{Proposition}
\begin{document}

\title{Statistical Assessment of IP Multimedia Subsystem in a Softwarized Environment: \\ a Queueing Networks Approach}
\author{Mario~Di~Mauro,~\IEEEmembership{Member,~IEEE,}
	Antonio~Liotta,~\IEEEmembership{Senior Member,~IEEE}
	\IEEEcompsocitemizethanks{\IEEEcompsocthanksitem M. Di Mauro is with the Department of Information and Electrical Engineering and Applied Mathematics (DIEM), University of Salerno, 84084, Fisciano, Italy (E-mail: mdimauro@unisa.it).
		\newline
		
		A. Liotta is with the School of Computing, Edinburgh Napier University, Edinburgh EH105DT , U.K. (E-mail: a.liotta@napier.ac.uk)
		
	}
}

\maketitle

\begin{abstract}

The Next Generation 5G Networks can greatly benefit from the synergy  between virtualization paradigms, such as the Network Function Virtualization (NFV), and service provisioning platforms such as the IP Multimedia Subsystem (IMS). The NFV concept is evolving towards a lightweight solution based on {\em containers} that, by contrast to classic virtual machines, do not carry a whole operating system and result in more efficient and scalable deployments. On the other hand, IMS has become an integral part of the 5G core network, for instance, to provide advanced services like Voice over LTE (VoLTE). In this paper we combine these virtualization and service provisioning concepts, deriving a containerized IMS infrastructure, dubbed cIMS, providing its assessment through statistical characterization and experimental measurements. Specifically, we: $i)$ model cIMS through the \textit{queueing networks} methodology to characterize the utilization of virtual resources under constrained conditions; $ii)$ draw an extended version of the  Pollaczek-Khinchin formula, which is useful to deal with bulk arrivals; $iii)$ afford an optimization problem focused at maximizing the whole cIMS  performance in the presence of capacity constraints, thus providing new means for the service provider to manage service level agreements (SLAs); $iv)$ evaluate a range of cIMS scenarios, considering different queuing disciplines including also multiple job classes. An experimental testbed based on the open source platform \textit{Clearwater} has been deployed to derive some realistic values of key parameters (e.g. arrival and service times).

\end{abstract}

\begin{IEEEkeywords}
Softwarized Networks, IP Multimedia Subsystem, Queueing Networks, Container-based Architectures, $5$G Service Chains.
\end{IEEEkeywords}

\IEEEpeerreviewmaketitle

\section{Introduction}

\IEEEPARstart{S}{oftwarization} plays a crucial role in 5G network infrastructures \cite{kellersurvey,basta2017}. It refers to those systems, tools, and procedures which intervene across the transformation process at the basis of novel telecommunication frameworks. The Network Function Virtualization (NFV) paradigm plays a central role in this process, since it provides a series of advantages such as flexibility in service provisioning, efficiency in resource utilization, and considerable potential for cost reductions \cite{nfvsurvey,nfvsurvey2}. Virtualized environments have revolutionized the deployment of new services by means of the so called Service Function Chains (SFC), which allow a smart and customizable composition of $5$G-based network functions \cite{sfclatre,sfcturck}, and open the door to new strategies for resource allocation \cite{optalloc} along a more efficient management of distributed infrastructures \cite{marquez}. 

An interesting evolution of virtualized systems is represented by container-based network architectures \cite{latrecontainer,boutaba}. Unlike classic virtual machines, containers are lightweight software instances which do not embed a whole operating system (OS). Containers run on the same hardware by sharing the OS that is mounted on the physical machine, thus, the isolation is guaranteed at the OS process level \cite{containersurvey18,rehman}. These processes are managed through dedicated platforms such as Docker \cite{docker}, typically composed of a main engine (often referred to as the container manager) and of a certain number of instances that can be easily deployed across a different set of cloud environments. 
Moreover, container technology is particularly suited to implement the network slicing concepts, providing a unique opportunity to assign fully dedicated resources per slice, which can in turn be dynamically reassigned to boost the cost/efficiency trade-off of the whole system \cite{marquez}.

Because of this level of versatility, container technologies are attracting the attention of the Telco industry, who see great value in dynamic transportation and efficient execution. Exemplary is the case of AT\&T that has been one of the first to expose (on a dedicated platform) small, independent, and self-contained business functions through container-based APIs \cite{att}. 

Another core part of 5G infrastructures is the IP Multimedia Subsystem (IMS), which has been identified as the best candidate for delivering multimedia content and services \cite{camarillo} such as gaming, presence, and Peer-to-Peer resource sharing \cite{liotta07}. IMS is also well suited for virtualized/containerized deployments \cite{survey16,nguyen18}, which is why it is drawing the attention of industry top players \cite{eri2014,nec2015}. The ETSI standardization group has included the virtualized IMS framework as a desirable solution for mobile next generation networks \cite{etsi}. In fact, the virtualized IMS can be considered a particular realization of an SFC, since the softwarized nodes have to be traversed in a predetermined order to provide specific services (e.g. IMS Registration).  

The versatility of a virtualized IMS solution is further amplified within the Clearwater project \cite{clearwater}, an open-source IMS implementation (written in Java and C++) deployable on a container-based architecture that represents a valuable example of a softwarized network infrastructure \cite{cotroneo17,cotroneo17bis}. Remarkably, containerized IMS functionalities offered by Clearwater have been embodied in a Proof-of-Concept pilot by Norwegian telco provider Telenor \cite{telenor}, where Red Hat Openshift has been exploited as container platform. 

Inspired by this fruitful combination between virtualization and service provisioning concepts, in this paper we consider a container-based IMS framework, dubbed cIMS. We carry out a statistical characterization under a range of scenarios, where some realistic parameters are directly derived by a pilot implementation on the Clearwater platform.

Our assessment relies on the queueing networks methodology which has a double virtue: on one hand, it is a well-assessed framework that allows to capture the behavior of interconnected systems (such as the case of cIMS nodes); on the other hand, it represents the most appropriate theory to characterize cases in which the resource usage is constrained by a wait, as often occurs in virtualized environments where it is necessary to share resources. 

Our modeling phase (which also embodies a generalization of Pollaczek-Khinchin formula for bulk requests) is preparatory to afford two analyses. The first one concerns a performance evaluation of different cIMS deployments, whereby capacity constraints are introduced, which requires solving a convex optimization problem. The second one is aimed at evaluating different cIMS scenarios by taking into account two formalisms: the Jackson framework \cite{jackson63}, useful to model networks nodes obeying to First-Come-First-Serve (FCFS) queueing discipline and where a single type of job is admitted; the BCMP framework \cite{bcmp75}, where nodes can implement disciplines other than the classic FCFS, and where multiple types of jobs are permitted. Results of aforementioned analyses reveal how the cIMS performance is affected either by capacity constraints and by deployment scenarios, offering to telco providers helpful indications for SLA tuning.

The rest of paper is structured as follows. Section \ref{sec:rw} provides an {\em excursus} of works that afford similar approaches, leading to highlighting the main contribution of our work in relation to the existing literature. Section \ref{sec:ims} is aimed at describing the Clearwater framework as a way to realize IMS platforms, which is the basis for our cIMS implementation. Section \ref{sec:qmodel} introduces the adopted queueing networks model, where we consider the case of bulk arrivals, and describe the optimization problem. In Section \ref{sec:exp}, we afford a performance analysis, by considering several conditions of deployments (e.g. single/multiple class requests). Finally, Section \ref{sec:concl} draws conclusion and provides hints for future research.

\section{Related Research and Contributions}
\label{sec:rw}

Over the recent years, academia and industry alike have devoted an increasing interest to the characterization of 5G network architectures and their constitutive elements, with analyses ranging from optimal resource distribution of virtualized multimedia nodes \cite{duan17} to availability characterization of virtualized IMS deployments \cite{dimaurotsc}. Yet, research and practical developments in this area are incredibly fast-paced, and it would take a dedicated review paper to provide a comprehensive snapshot. Instead, in this section we focus on recent works that have closer relevance or affinity to our contributions. 

In many cases, existing works embed a theoretical modeling of novel network infrastructures but fall short on experimental part, due to the difficulty in developing practical IMS implementations. We overcome this limitation, providing both theoretical and experimental results.

We adopt a queueing theory approach, which has been profitably exploited in some recent works to face various issues relating to modern network architectures. This is the case of \cite{xiong16}, where the authors propose performance models for OpenFlow switches and SDN controllers, respectively as $M^X/M/1$ and $M/G/1$ queueing systems. They also carry out a numerical analysis in a simulated environment, using the Cbench stress test tool. A similar analysis has been afforded in \cite{sood16}, where the authors model SDN switches by exploiting $M/Geo/1$ queues, assuming service times that obey geometric distributions. In the cited cases, no network interconnections among elements are considered (e.g., among SDN switches) being their focus on individual nodes (e.g., the controller).

A step forward is made by authors in \cite{mahmood15}, where a Jackson network model is exploited to characterize the interaction between the SDN controller and the switches, which are both modeled as $M/M/1$ systems. Our work, further extend their models, by capturing more sensitive conditions, such as the case of bulk traffic effects. 

Just like us, other authors employ the Jackson network framework. The work in \cite{garzon17} focused on modeling a VNF charaterized by several chained instances. However, they treat a VNF as an individual element, rather than considering it as part of a more complete architecture, which is what we achieve herein.

Open Jackson networks are also used in \cite{zhang17} to model VNF chains in a datacenter. Yet, their focus is on a different problem in relation to optimal VNF placement. 

Finally, authors in \cite{Ye18} consider an $M/D/1$ model to calculate end-to-end packet delay in a flow traversing a node of a VNF-based chain. They present interest findings based on OMNet++ simulations, but do not consider additional metrics as we do herein. 

In another track of works, a more explicit attention is paid to characterize the IMS framework by means of queueing theory models. Authors in \cite{amooee09} and \cite{mishra14} present valuable analyses of delay and bandwidth utilization, respectively. Both works focus on the features of single servers, without considering, as we do, the distinguished chain structure of IMS.

Interesting is also the work in \cite{chi08}, where a queueing model is presented to characterize the behavior of \textit{Notify} messages across an IMS presence server, starting from an analysis of the traffic load distribution. Also in this case, the analysis is focused on a single element (the presence server) but does not capture the effects produced by other nodes. 


In this work we intend to characterize, as precisely as possible, a containerized IMS service chain, a key element of 5G networks. We can pinpoint a number of novel contributions. First, we statistically model a containerized IMS service chain, exploiting the queueing networks framework to capture the relationships that exist among IMS nodes in terms of queueing features. We also take into account the possibility of bulk requests arrival, deriving a generalized form of Pollaczek-Khinchin formula. Then, we solve a connected optimization problem, which is useful to evaluate the global performance of the cIMS service chain. Finally, we carry out an extensive experimental analysis by exploiting data obtained from a Clearwater platform deployment. 

The following outcomes stem from our analyses:
\begin{itemize}
	\item The Jackson framework fits well the modeling of single class requests (e.g., when all customers belong to a single class) within a chain of nodes, and allows to capture the dynamic behavior of observables (e.g., the mean waiting time) at each single node, where the influence of position within the cIMS chain along with the routing logic emerges;
	\item The mean response time across the whole chain (that is directly connected to SLAs offered by telco operators) is characterized in terms of capacity vectors, namely, a set of weights constituting the constraint of an optimization problem focused on minimizing the total time spent in the system;
	\item BCMP framework is introduced to extend the analysis to multi class job requests and two different comparisons are proposed. The first one against the single class (Jackson) model, whereby it emerges that our model exhibits better results in terms of waiting time, at the cost of a more complex architecture. The second comparison is aimed at evaluating the differences emerging by adopting two different queueing policies across the multi-class setting: FCFS and PS (Processor Sharing).
\end{itemize}

\noindent From a telco provider perspective, the afforded characterization turns to be very useful to capture the insights concerning the mutual influence among the nodes that actually belong to a network chain, such as the considered cIMS infrastructure. As a result, providers can guarantee the offered SLAs by optimizing the trade-off between costs and available resources (in terms of capacity, type of nodes, and admissible configurations).

\section{IMS within a Containerized Environment}
\label{sec:ims}

In this section, it is useful to provide in advance a brief description of the Clearwater architecture which represents the reference framework for our experimental analysis, as described in Section V. This preview is helpful to better understand the relationship between the theoretical approach (queueing networks) and the experimental part (cIMS framework) introduced in this work. 

We highlight that a virtualized and, \textit{a fortiori}, container-based IMS solution can elastically scale out under the control of MANO (MANagement and Orchestration), the layer of the NFV reference architecture \cite{mano} in charge of adding (or removing) resources when required.
Each IMS node is developed as a container, while each container is deployed on a microservice infrastructure. In fact, containers in Clearwater are managed by a container engine (we use Docker in our deployment), which is  installed on a virtual machine. Figure \ref{fig:ims} shows a sketch of the Clearwater architecture. A brief description of the nodes, along with their functionality, is proposed next.

\begin{itemize}
	\item \textit{Bono}: it represents the P-CSCF (Proxy-Call Session Control Function) node that acts as anchor point for clients relying on the the Session Initiation Protocol (SIP). It provides NAT traversal procedures as well. 
	\item \textit{Sprout}: this node implements a SIP router and acts as S-CSCF (Serving) and I-CSCF (Interrogating), simultaneously. The former is in charge of managing SIP registrations, whereas, the latter manages the association between UEs (User Equipments) and a specific S-CSCF. In fact, the Sprout node supports SIP for the communication with P-CSCF, and the Diameter/HTTP protocol to retrieve information from SLF/HSS nodes. 
	\item \textit{Homestead}: this node represents the HSS (Home Subscriber Server) and is involved in the users authentication procedures.  
	\item \textit{Ralf}: it acts as a CTF (Charging Trigger Function) module, and is involved in charging and billing operations.
	\item \textit{Homer}: this node manages the service setting documents per user, by acting as an XML Document Management Server (XDMS).
\end{itemize}

\noindent It is useful to underline that, in this work, we model all essential (and mandatory) nodes (the ones enclosed in a red dashed rectangle in Fig. \ref{fig:ims}) which are needed to implement a working IMS, namely: P-CSCF, S/I-CSCF, HSS.

\begin{figure}[t]
	\centering
	\captionsetup{justification=centering}
	\includegraphics[scale=0.31,angle=90]{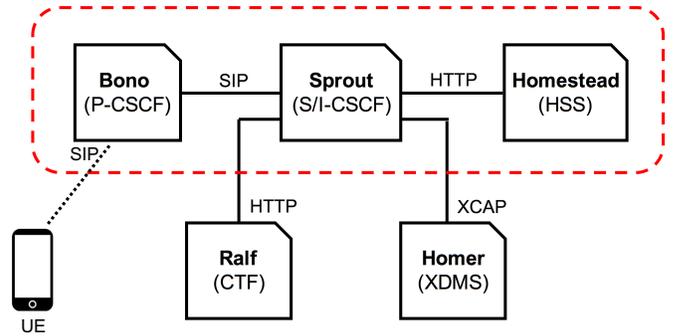}
	\caption{Sketch of Clearwater IMS architecture.}
	\label{fig:ims}
\end{figure} 

\section{The Queueing Networks Model}
\label{sec:qmodel}

In this section, we introduce some details about the queueing networks methodology that we adopt to model the cIMS infrastructure. It is worthwhile recalling that the queueing networks framework is particularly suited to tackle the case of multiple nodes arranged in chains (as it occurs in the considered cIMS scenario), whereby the interconnections among nodes influence the queues distributions. Indeed, a delay caused by an increasing-size queue at a node, affects all the operations that will be performed at the downstream nodes, according to a cascade effect. 

For the sake of simplicity, we start by recasting the interconnection scheme of Fig. \ref{fig:ims} in the model of Fig. \ref{fig:ims_queue}. During this operation, and aimed at considering an even more realistic scenario, we introduce the SLF (Subscriber Location Function) node that routes requests with probabilities p$_1$, p$_2$, and p$_3$ to nodes HSS$_1$, HSS$_2$, and HSS$_3$, respectively, associated to three kinds of user profiles. In practical IMS deployments, telecom operators differentiate their SLAs by means of multiple HSSs governed by an SLF, which is in charge of forwarding requests among HSSs. 
At this stage, it is useful to clarify that the following analysis is split in two: on one hand, we consider a ``regular" case dealing with the standard functioning of the IMS system, whereby each request is processed in a chained way by the series of network nodes, and where classic network queueing theory fits well. On the other hand, we consider a ``special" case, taking into account the problem of requests arriving in bulk, representing events that can occur occasionally  (typically in conjunction with elections, important sporting events etc.). For convenience, we start by presenting this latter case.

\begin{figure}[t]
	\centering
	\captionsetup{justification=centering}
	\includegraphics[scale=0.3,angle=90]{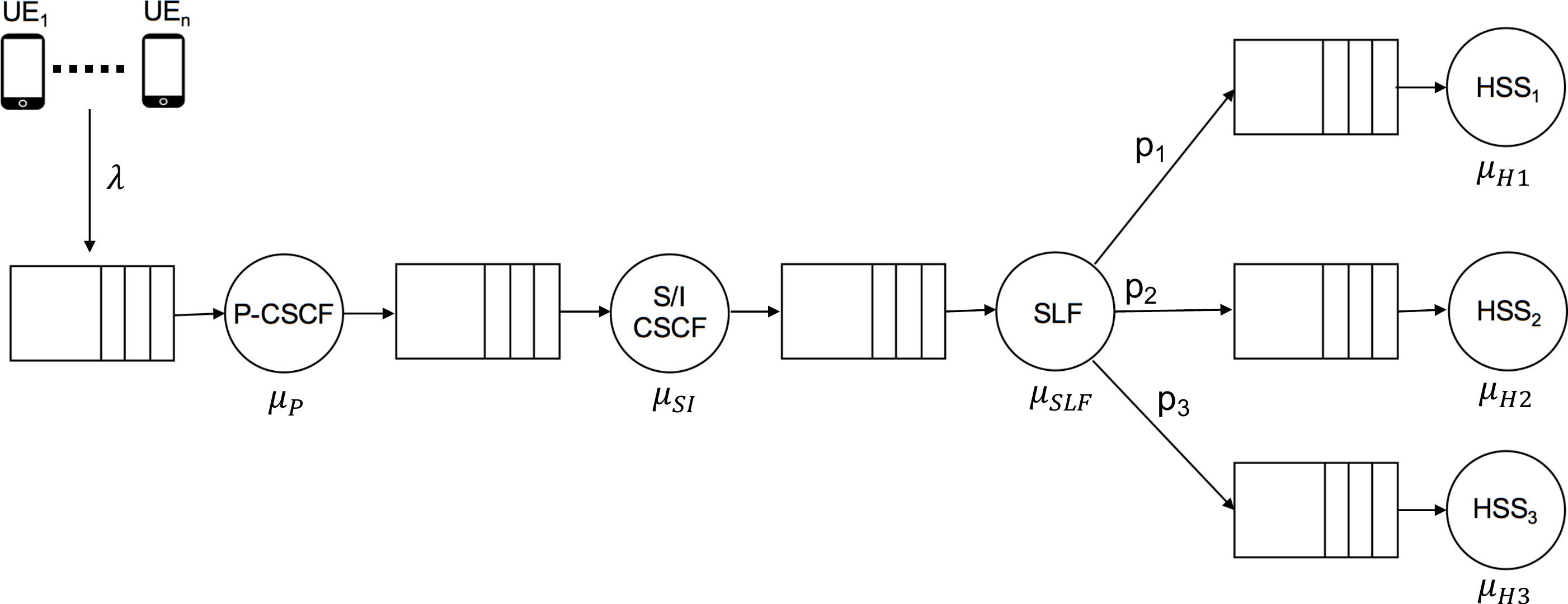}
	\caption{Containerized IMS queueing networks model.}
	\label{fig:ims_queue}
\end{figure} 

\subsection{Bulk arrivals case}

In this section, we consider the P-CSCF node to deal with the special case of bulk arrivals. We want to remark that the functionality of managing bulk requests can also be delegated to a dedicated upstream node (eventually, a load balancer) in charge of selecting more than one softwarized IMS chain to process the requests. In order to address this particular case, we consider an $M/G/1$ queue (requests arrive according to a Poisson process whereas service times have a generic distribution), which allows us to arrive at an extended version of the so-called Pollaczek-Khinchin (P-K) formula that, in classic literature (\cite{bgbook,kleinbook}), is typically derived with no reference to the bulk case.

Let us define some useful quantities: $A(t)$ is the number of requests which arrive at node in the interval $[0,t]$; $A_b(t)$ is the number of bulks of requests which arrive at node in the interval $[0,t]$; given $b_k$ the size of $k$-th bulk, we also have:
\beq
A(t)=\sum_{k=1}^{A_b(t)}b_k.
\label{eq:bulk}
\eeq 
Moreover, the mean bulk arrival rate $\lambda_b$ and the mean (overall) arrival rate $\lambda$ can be defined, respectively, as
\beq
\lambda_b=\lim\limits_{t \rightarrow \infty} \frac{A_b(t)}{t}, ~~~ \lambda=\lim\limits_{t \rightarrow \infty} \frac{A(t)}{t}.  
\label{eq:lambdas}
\eeq
The relationships between $\lambda_b$ and $\lambda$ defined in (\ref{eq:lambdas}) can be derived through the following Proposition.

\begin{lemma}
	By assuming that $\lambda_b$ and $\mathbb{E}[b]$ (the average bulk size) exist and are finite, we have:
	$\lambda=\lambda_b \mathbb{E}[b]$.
\end{lemma}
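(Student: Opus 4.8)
The plan is to exploit the compositional structure of $A(t)$ recorded in \eqref{eq:bulk}, which expresses the total request count as a random sum of bulk sizes indexed by the bulk-arrival count $A_b(t)$. The natural route is to factor the empirical rate $A(t)/t$ into the product of the empirical bulk-arrival rate and the empirical average bulk size, and then pass each factor separately to its limit; the identity $\lambda=\lambda_b\,\mathbb{E}[b]$ is then just the rate form of Wald's identity for random sums.

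First I would write, for every $t$ large enough that $A_b(t)\geq 1$,
\beq
\frac{A(t)}{t} = \frac{1}{t}\sum_{k=1}^{A_b(t)} b_k = \frac{A_b(t)}{t}\cdot\frac{1}{A_b(t)}\sum_{k=1}^{A_b(t)} b_k.
\eeq
The first factor converges to $\lambda_b$ directly by the definition in \eqref{eq:lambdas}. For the second factor I would invoke the strong law of large numbers on the (implicitly i.i.d.) bulk sizes $\{b_k\}$, which guarantees that the empirical mean $\tfrac{1}{n}\sum_{k=1}^n b_k$ converges to $\mathbb{E}[b]$ as $n\to\infty$. Multiplying two convergent factors then yields the product of their limits, i.e. $\lambda=\lambda_b\,\mathbb{E}[b]$.

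The delicate point is that the empirical mean above is evaluated not at a deterministic index $n$ but at the random, time-dependent index $A_b(t)$. To bridge this gap I would first note that finiteness and positivity of $\lambda_b$ force $A_b(t)\to\infty$ as $t\to\infty$, and then apply the elementary fact that if a sequence $c_n\to L$ and $N(t)$ is an integer-valued function with $N(t)\to\infty$, then $c_{N(t)}\to L$ as well. Applying this with $c_n=\tfrac{1}{n}\sum_{k=1}^n b_k$, $L=\mathbb{E}[b]$, and $N(t)=A_b(t)$ confirms that the second factor tends to $\mathbb{E}[b]$, completing the argument.

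The main obstacle I anticipate is precisely this random time-change: one must ensure that the SLLN, stated along the deterministic index $n$, persists when the index is replaced by the random (but almost surely divergent) count $A_b(t)$. This hinges on the implicit modeling assumption that the bulk sizes form an i.i.d. sequence with finite mean $\mathbb{E}[b]$; under that hypothesis the argument is clean, whereas relaxing it to merely stationary or weakly dependent bulk sizes would demand a more careful ergodic-theoretic justification of the second factor's limit.
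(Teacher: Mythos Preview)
Your proposal is correct and follows exactly the same route as the paper: factor $A(t)/t$ as $(A_b(t)/t)\cdot\bigl(\tfrac{1}{A_b(t)}\sum_{k=1}^{A_b(t)}b_k\bigr)$ and let each factor converge separately. In fact you are more careful than the paper, which simply writes the factorization and passes to the limit without commenting on the random-index issue or the i.i.d.\ assumption underlying the SLLN step.
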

\begin{proof}
	Starting by definition in (\ref{eq:lambdas}) one has:
	\beqa
	\lambda&=&\lim\limits_{t \rightarrow \infty}\frac{A(t)}{t} \\ \nonumber
	&=& \lim\limits_{t \rightarrow \infty} \frac{1}{t} \sum_{k=1}^{A_b(t)}b_k \\ \nonumber
	&=& \lim\limits_{t \rightarrow \infty} \frac{A_b(t)}{t} \frac{1}{A_b(t)} \sum_{k=1}^{A_b(t)}b_k =  \lambda_b \mathbb{E}[b].
	\eeqa
\end{proof}

Indicating by $\mathbb{E}[S]$ the mean service time of the node, the utilization factor $\rho$, namely, the proportion of time during which the node is busy, can be accordingly defined as $\rho=\lambda \mathbb{E}[S] = \lambda_b \mathbb{E}[b] \mathbb{E}[S]$, where the stability condition $\rho<1$ holds.    
It is now interesting to derive an expression for the mean waiting time at the entry of the P-CSCF node, provided that requests arrive often in bulks.
We start from a known procedure (see \cite{bgbook}) that allows to derive the P-K formula for a $M/G/1$ system queue. 
Suppose that service times are represented by i.i.d. random variables $S=(S_1, \dots, S_s)$. The P-K formula provides an expression for the expected request waiting time in queue $W$, and admits the following expression:

\beq
\mathbb{E}[W]=
\frac{\lambda \mathbb{E}[S^2]}{2(1-\rho)},
\label{eq:pk}
\eeq
where $\mathbb{E}[S^2]$ is the second moment of service time. In case of $M/M/1$ system $\mathbb{E}[S^2]=2/\mu^2$, and, the equation (\ref{eq:pk}) becomes
\beq
\mathbb{E}[W]=\frac{\rho}{\mu(1-\rho)}.
\label{eq:new_pk}
\eeq
Proof of eq. (\ref{eq:pk}) and, then, (\ref{eq:new_pk}) requires the definition of $R_i$, namely, the residual service time experimented by request $i$ when a prior request is being served (see \cite{bgbook}). 
Defined the mean residual time $R=\lim\limits_{i\rightarrow \infty}E[R_i]$, it is possible to show that $\mathbb{E}[W]=R + \mathbb{E}[S] A_q$, where $A_q$ is the mean number of requests at P-CSCF node that, given the Little's theorem, can be expressed as $A_q=\lambda \mathbb{E}[W]$. 
Thus, by a trivial substitution we finally get\footnote{Such a formula can be found in \cite{bgbook} - eq. (3.47), along with the proof of residual time derivation. For the proof of the version with bulk requests (not afforded in \cite{bgbook}) we maintain a coherent notation.}:   
\beq
\mathbb{E}[W]=R+ \mathbb{E}[S] A_q = R+ \lambda \mathbb{E}[S] \mathbb{E}[W] = R+ \rho\mathbb{E}[W],
\label{eq:residual}
\eeq
where $R=\lambda \mathbb{E}[S^2]/2=\rho/\mu$.
Let now consider the more general case where the requests (in our case IMS registration flows) arrive in bulk, and where the size of bulk $b$ has a certain distribution (and is independent of requests service times). Denoting by $W_b$ the waiting time of a request within a bulk, eq. (\ref{eq:residual}) can be rewritten according the following form: 
\beq
\mathbb{E}[W] = \frac{\rho}{\mu}+ \rho\mathbb{E}[W] + \mathbb{E}[W_b]
\label{eq:totalW}
\eeq
and the following Proposition holds:

\begin{lemma}
	The mean waiting time in queue of an arbitrary request $\mathbb{E}[W_b]$ obeys to:
	\beq
	\mathbb{E}[W_b]=\frac{1}{2 \mu} \left[  \frac{\mathbb{E}[b^2]}{\mathbb{E}[b] }  -1 \right].
	\eeq

\end{lemma}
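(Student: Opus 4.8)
The plan is to interpret $\mathbb{E}[W_b]$ as the expected total service requirement of the requests that precede a tagged request \emph{within its own bulk}. Since each request demands a service time of mean $\mathbb{E}[S]=1/\mu$, and the requests of a bulk are cleared one after another, the within-bulk delay experienced by the tagged request is simply the aggregate service time of the requests ahead of it in the same bulk. I would therefore condition on the bulk size and on the tagged request's position inside the bulk, count how many requests lie ahead, and multiply by $1/\mu$.

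The crucial---and most error-prone---step is selecting the correct distribution for the bulk that \emph{contains} an arbitrary request. Because a larger bulk contributes proportionally more requests, a uniformly chosen request is more likely to belong to a large bulk; the appropriate weighting is the \emph{size-biased} law $k\,P(b=k)/\mathbb{E}[b]$, rather than the plain bulk-size distribution $P(b=k)$. Overlooking this biasing would yield an incorrect coefficient, and I expect it to be the principal obstacle.

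Conditioned on the tagged request belonging to a bulk of size $k$, I would take its position to be uniform over $\{1,\dots,k\}$ (random service order inside the bulk), so that the number of preceding requests is uniform on $\{0,\dots,k-1\}$ with mean $(k-1)/2$. Multiplying this average count by the mean service time $1/\mu$ gives a conditional within-bulk wait of $(k-1)/(2\mu)$. Averaging against the size-biased distribution then yields
\beq
\mathbb{E}[W_b]=\sum_{k}\frac{k\,P(b=k)}{\mathbb{E}[b]}\,\frac{k-1}{2\mu}=\frac{1}{2\mu\,\mathbb{E}[b]}\sum_{k}k(k-1)\,P(b=k).
\eeq
Recognizing the surviving sum as $\sum_{k}k^2 P(b=k)-\sum_{k}k\,P(b=k)=\mathbb{E}[b^2]-\mathbb{E}[b]$ and factoring $\mathbb{E}[b]$ out of the denominator delivers precisely $\mathbb{E}[W_b]=\frac{1}{2\mu}\big[\mathbb{E}[b^2]/\mathbb{E}[b]-1\big]$. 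Once the size-biasing in the second step is handled correctly, the remaining manipulation is elementary, and the result is consistent with the intuition that a deterministic bulk size ($\mathbb{E}[b^2]=\mathbb{E}[b]^2$ with $\mathbb{E}[b]=1$) contributes no extra within-bulk wait.
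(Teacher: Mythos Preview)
Your argument is correct and, at the computational level, coincides with the paper's: the paper sums the within-bulk waits over all requests in a generic bulk to obtain $\mathbb{E}[S_n]=\frac{1}{2\mu}\bigl(\mathbb{E}[b^2]-\mathbb{E}[b]\bigr)$ and then divides by $\mathbb{E}[b]$, which is precisely your size-biased average rewritten. The only difference is framing---you invoke size-biasing of the tagged request's bulk explicitly, whereas the paper reaches the same $k(k-1)/2$ factor by summing $S_{1,n}+(S_{1,n}+S_{2,n})+\cdots$ inside a bulk of size $k$ and normalizing by $\mathbb{E}[b]$ afterwards.
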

\begin{proof}
	Let $S_{i,j}$ be the service time (i.i.d.) of request $i$ in the bulk $j$. If $S_n$ is the total waiting time of all requests in a bulk $n$, it is possible to write:
	\beq
	S_n=S_{1,n}+(S_{1,n}+S_{2,n})+\dots+(S_{1,n}+\dots+S_{(Z-1),n})
	\label{eq:sum_wait}
	\eeq
	for $Z\ge2$, and with $S_n=0$ for $Z=0,1$, being $Z$ a random variable representing the bulk size. Moreover, we assume that $S_{i+1,n}>S_{i,n}$ for $i \ge 1$.
	
	We have:
	\beqa
	\label{eq:conditional}
	& & \mathbb{E}[S_n | Z=h]  \\ \nonumber
	&& = \mathbb{E}[S_{1,n}+(S_{1,n}+S_{2,n})+\dots+(S_{1,n}+\dots+S_{h-1,n})] \\ \nonumber
	&& \overset{(a)}{=} \mathbb{E}[S] \frac{h(h-1)}{2}=\frac{1}{\mu} \frac{h(h-1)}{2} ~~~~ (h \ge 0), 
	\eeqa
	where, the equality $\overset{(a)}{=}$ comes from the fact that, considering a stationary queue, the order of requests is irrelevant, thus, the subscripts are suppressed.    
	
	By using (\ref{eq:conditional}), and, posing $ \mathbb{P}(Z=h)=p_h$, we get:
	
	\beqa
	\mathbb{E}[S_n]&=&\sum_{h=1}^{\infty}\mathbb{E}[S_n | Z=h] p_h \\ \nonumber
	&=& \sum_{h=1}^{\infty} \frac{1}{\mu} \frac{h(h-1)}{2} p_h \\ \nonumber
	&=& \frac{1}{2 \mu} \left[ \sum_{h=1}^{\infty} h^2 p_h - \sum_{h=1}^{\infty} h p_h \right] \\ \nonumber
	&=& \frac{1}{2 \mu} \left [\mathbb{E}[b^2] - \mathbb{E}[b] \right]
	\label{eq:essen}
	\eeqa
	thus,
	
	\beq
	\mathbb{E}[W_b]=\frac{\mathbb{E}[S_n]}{\mathbb{E}[b]}=\frac{1}{2 \mu} \left[  \frac{\mathbb{E}[b^2]}{\mathbb{E}[b] }  -1 \right],
	\label{eq:wubi}
	\eeq
	and the Proposition is proved.
	
	Moreover, substituting (\ref{eq:wubi}) in (\ref{eq:totalW}) we obtain:
	\beq
	\label{eq:finale}
	\mathbb{E}[W]= \frac{\rho}{\mu (1-\rho)} + \frac{1}{2 \mu (1-\rho)} \left[  \frac{\mathbb{E}[b^2]}{\mathbb{E}[b] }  -1 \right],
	\eeq
	where the first term of R.H.S. of (\ref{eq:finale}) represents the mean waiting time of requests arriving according to a Poisson process with rate $\lambda$, whereas, the second term indicates the additional mean delay due to bulk arrivals. Obviously, for $b=1$ (corresponding to a single arrival) the second R.H.T. term of (\ref{eq:finale}) vanishes, and we end up again with the classic P-K formula for $M/M/1$ queues. 
\end{proof}
Figure \ref{fig:bulk}, shows the mean waiting time in queue for Poisson arrivals in bulk with a uniform distribution, and with a maximum bulk size amounting to $100$. In fact, being P-CSCF the first contact point of an IMS-based architecture, it can be called to manage bulk traffic by implementing dynamic scaling policies (not faced in this work) allowing to increase computational resources when bulk arrivals occur. 
In the case of Markovian service time assumption, another possibility is to increase the number of instances working in parallel leading to a $M/M/m$ queueing model, so that each request always finds an instance able to serve it, and no bulk is formed. Specifically, the ``regular" case (no exceptional bulk requests) afforded in the next section basically lies on exponential assumptions that we accurately justify in the following.  
	
	\begin{figure}[t]
		\centering
		\captionsetup{justification=centering}
		\includegraphics[scale=0.47]{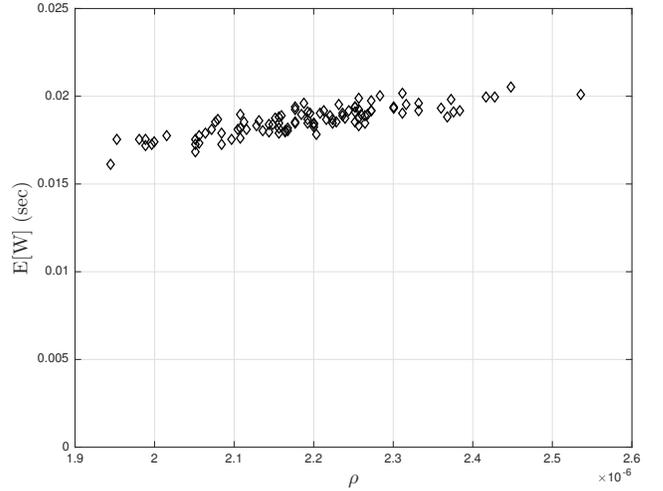}
		\caption{Mean waiting time in queue for Poisson arrivals in bulk with uniform distribution (max bulk size = 100).}
		\label{fig:bulk}
	\end{figure}

\subsection{IMS chain queueing model}
\label{subs:imschain}

Before detailing the network queueing model, we need to clarify some assumptions that allow to reasonably map the theoretical model onto the IMS-based deployment. The first one pertains to the IMS requests arrival times that are supposed to follow a Poisson distribution in accordance to classic teletraffic theory, whereby packets (or calls) originate from a vast population of independent users. This assumption became popular for modeling arrival times in legacy telecommunication networks \cite{poiss_cell}, due to its mathematical tractability. It has subsequently been adopted also in modern data networks when characterizing multimedia traffic. Some examples include: \cite{schu_2000} explicitly focused on exponential arrivals of internet telephony calls; \cite{schu_2008} including the proposal of a SIP simulator where, taking into account also suggestions provided by IETF SIP design team, call generations and call holding times follow an exponential model; \cite{gur_2005} where a SIP proxy server is modeled by means of an M/M/1 queueing system. More recently, authors in \cite{schembra2015} propose a management model for an SDN/NFV customer premises equipment (CPE) node, where the CPE node is supposed to be reached by a Poisson distributed network traffic.
The second assumption involves the Markovian hypothesis about the service times of IMS network nodes. This assumption is justified by the consideration that very long service times occur only occasionally (e.g. when a node is overloaded also by other tasks such us software updates). Whereas, for the remaining time the network node tries to evade the request as fast as possible. Also in this case, scientific literature exhibits valuable examples: for instance, in \cite{Subramanian_2009,Subramanian_2011}, an M/M/1 scheme has been adopted to model a SIP proxy server, where the considered assumptions have been validated in conjunction with CISCO performance team. Based on realistic simulations is also the work of Bell-Labs authors \cite{Chi_2008}, where service processing times (in particular related to SIP PUBLISH messages) are assumed to be exponentially distributed. 

Essentially, an IMS system is nothing but a chained of elements that have to be traversed in a predefined order to provide a specific service (e.g. Registration). This configuration is well suited to be represented by the open Jackson networks formalism. An open network \cite{trivedibook} is a particular type of queueing network where jobs (IMS requests) enter the system from outside according to a Poisson process. Once reached the system (in our case the P-CSCF node), jobs are routed within the chain of nodes and, once service is completed, they leave. This formalism is counterposed to closed networks where the number of jobs entering the system remains constant, since these are being reinserted in the system in a loop fashion. 
In an open network with $N$ nodes, the following balance equation holds:

\beq
\lambda_i=\lambda + \sum_{j=1}^{N}\lambda_j \cdot p_{ji},
\label{eq:balance}
\eeq
where: $\lambda_i$ denotes the overall arrival rate of jobs at the node $i$ ($i=1,\dots,N$), $\lambda$ denotes the arrival rate of jobs from outside\footnote{We consider that external jobs/requests always arrive at P-CSCF before entering the system.}, and $p_{ji}$ denotes the routing probability, namely, the probability that a job is moved to node $i$ once the service at the node $j$ is completed. 
In case that arrivals are Poisson from outside, the service times are exponentially distributed (eventually, each node can be composed of $m_i \geq 1$ service instances), and the service disciplines are FCFS, the system is referred to as an open Jackson network. Again, if in an open network the ergodicity condition $\rho_i<1$ is guaranteed for each node, the steady-state probability of the whole system (network of queues) can be expressed as the product of marginal probabilities of the single nodes: 

\beq
\pi(k_1, k_2, \dots, k_N)=\prod_{i=1}^{N}\pi_i(k_i),
\label{eq:jackson}
\eeq    
where, the joint probability vector on the L.H.S. of (\ref{eq:jackson}) represents the steady-state probability of having $k_i$ jobs at node $i$ ($i=1,2,\dots,N$), whereas, at R.H.S., we have a product of marginal probabilities. Such result (proved in \cite{qnetbook}) is known as the Jackson's Theorem, and the resulting network is often referred to as \textit{product-form} network. 
In the case of $M/M/1$ queues, the marginal probabilities $\pi_i (k_i)$ admit the following  expression:
\beq
\pi_i(k_i)=(1-\rho_i)\rho_i^{k_i},
\label{eq:margprobmm1}
\eeq    
where $\rho_i=\lambda_i/\mu_i$.
In the more general case of $M/M/m$ systems, the marginal probabilities $\pi_i (k_i)$ can be directly derived by  \cite{bgbook}:

\beqa
\centering
\pi_i (k_i)=
\left\{
\begin{array}{l}
{\begin{array}{ll}
	\hspace{-0.2cm} \pi_i(0)\frac{(m_i \rho_i)^{k_i}}{k_i !},  \;\;\;\;\; k_i \le m_i,
	\end{array}}
\\
\\
 \pi_i(0) \frac{m_i^{m_i} \rho_i^{k_i}}{m_i !},  \;\;\;\;\;  k_i > m_i,
\end{array}
\right.
\label{eq:marginal}
\eeqa

where: $\pi_i(0)$ is the steady-state probability, $\rho_i=\lambda_i/m_i \mu_i < 1$ and the condition $\sum_{k_i=0}^{\infty}\pi_i(k_i)=1$ holds. 
When dealing with network queues, another useful parameter to take into account is the mean number of visits $v_i$ of a request at node $i$, defined through the visit ratio (a.k.a. relative arrival rate) $v_i=\lambda_i/\lambda$ which can also be related to routing probabilities by means of the following equation:
\beq
v_i=p_{0i}+\sum_{j=1}^{N}v_j \cdot p_{ji}, 
\label{eq:visitprob}
\eeq
where $p_{0i}$ indicates the probability that a request comes from outside to $i$-th node. Such a measure is helpful to evaluate other quantities such as the mean time spent in the system, that, we characterize in the forthcoming performance assessment. 

\subsection{Optimization Problem}

In practice, many telco providers have to guarantee  SLAs that are often related to time constraints (e.g. delay) which a ``job" has to respect when it enters  a network system. In line with this consideration, let us consider the mean time spent by a job within a generic cIMS node (often called mean response time). This quantity is the sum of time spent in queue and time spent for processing (service time) at each node, and the following equality holds:

\beq
\mathbb{E}[T_i]=\mathbb{E}[W_i]+\mathbb{E}[S_i]=\frac{1}{\mu_i - \lambda_i}, 
\label{eq:tot_wait_time}
\eeq
where, $\mathbb{E}[W_i]$ can be derived by (\ref{eq:new_pk}), whereas, $\mathbb{E}[S_i]=1/\mu_i$ according to the $M/M/1$ assumption. Exploiting the results of the Jackson's Theorem, each single node in the IMS system can be modeled as an $M/M/1$ queue. 
Thus, aimed at minimizing the average total time that a job spends in the cIMS system, we want to solve the following convex optimization problem:

\beqa
	&\textnormal{minimize}& \quad \sum_{i=1}^{N} \frac{1}{c_i \mu_i - \lambda_i}  \nonumber  \\\nonumber\\
	 &\textnormal{subject to}& ~~~ \sum_{i=1}^{N} {c_i \mu_{i}} = C, ~ c_i \mu_i>\lambda_i, ~ \lambda_i \ge 0 
\label{eq:idealopt}
\eeqa
where:
\begin{itemize}
	\item $c_i >0$ is a capacity factor associated to the service rate of a specific node. In real scenarios, this value is related to the computational power (in terms of CPU, RAM, etc.) of a node, which in a cloud environment refers to the possibility of dynamically adjusting virtual resources;
	\item $C>0$ represents the total budget constraint.
\end{itemize}

It is useful to recall that the convenience of convex optimization formulation (when possible) leads to analytical expressions amenable to be solved by means of straightforward calculations. In the considered case, the convexity of problem directly stems from the convexity of function $\sum_{i=1}^{N}\frac{1}{c_i \mu_i -\lambda_i}$ since: $i)$ the term $\frac{1}{c_i \mu_i -\lambda_i}$ admits a positive second derivative with constraint $c_i \mu_i - \lambda_i >0$; $ii)$ the overall summation is again a convex function since it is a linear combination of convex functions with non-negative coefficients.  

Now, given a Lagrange multiplier $\mathscr{L}$, the optimization problem in (\ref{eq:idealopt}) can be rewritten as dual form:

\beqa
	&\textnormal{minimize}& \quad \sum_{i=1}^{N} \frac{1}{c_i \mu_i - \lambda_i} +  \mathscr{L}  \sum_{i=1}^{N}{c_i \mu_i}  \nonumber  \\\nonumber\\
	&\textnormal{subject to}& \quad  \mathscr{L} > 0, ~ c_i \mu_i>\lambda_i,  ~ \lambda_i \ge 0 .
\label{eq:lagrange}
\eeqa
It is possible to separately optimize the variables $\mu_i$ in problem (\ref{eq:lagrange}); thus, we have to find the optimal $\mu_o$ that minimizes the following Lagrangian:
\beq
\beta(\bm{\mu})=\frac{1}{c_o \mu_o - \lambda_o} + \mathscr{L} c_o \mu_o.
\eeq
The optimal solutions are obtained by nullifying the partial derivatives:

\beqa
\frac{\partial \beta}{\partial \mu_o}&=&- \frac{c_o}{(c_o \mu_o - \lambda_o)^2} + \mathscr{L} c_o =0\nonumber\\
&\Rightarrow&  \mu_o= \frac{1}{c_o} \left( \lambda_o + \frac{1}{\sqrt{\mathscr{L}}}  \right).
\label{eq:partialderiv}
\eeqa
By imposing the constraint in (\ref{eq:idealopt}), we can write: 

\beqa
\sum_{i=1}^{N} c_i \mu_i = C =  \sum_{i=1}^{N} \left(\lambda_i + \frac{1}{\sqrt{\mathscr{L}}} \right), 
\eeqa
that, after straightforward algebraic manipulations, leads to:
\beq
\frac{1}{ \sqrt{\mathscr{L}}} = \frac{C - \sum_{i=1}^{N} \lambda_i}{N}.
\label{eq:optlagr}
\eeq
Substituting (\ref{eq:optlagr}) in (\ref{eq:partialderiv}) we get the desired solution:
\beq
\mu_o = \frac{\lambda_o}{c_o} + \frac{C - \sum_{i=1}^{N} \lambda_i}{c_o N}.
\label{eq:muopt}
\eeq
This result can be interpreted as a variant of the optimal capacity allocation problem, as originally formulated by Kleinrock \cite{kleinbook}, and admits the following interpretation: the first term on R.H.S. of (\ref{eq:muopt}) accounts for the capacity allocation assigned to each node aimed at satisfying effective arrival rates; whereas, the second term accounts for an extra capacity distributed among other nodes. As the total number of nodes grows asymptotically ($N \rightarrow \infty$), it is possible to neglect the second term, thus, only the effective capacity assigned to a specific node is considered.
In the end, the optimal assignment of capacity factors (guaranteed by the solution of the analyzed convex optimization problem) can also be interpreted as the optimal allocation (or tuning) of additional instances \textit{m}, which a service provider can activate to counter a given mean response time constraint.

\section{Performance Assessment}
\label{sec:exp}

We start by arranging from scratch an experimental testbed of a cIMS infrastructure which will allow to collect realistic, experimental data (e.g. service times of cIMS nodes) that will, in turn, be useful to calculate metrics of interest (e.g. mean queue length, mean waiting time, etc.). Then, we carry out a performance evaluation that can be split in two parts: the first one is aimed at assessing the performance of a scenario where cIMS requests belong to the same class (Single Class Analysis), along with the evaluation of the optimal cIMS deployment w.r.t. a capacity constraint. In the second part, we extend the assessment to the case of cIMS requests differentiated per class (Multi Class Analysis), where we also consider the case of different queueing strategies.
In practice, such comparative analysis accounts for two models relying on the same intuition of characterizing a chained system in terms of the intermediate nodes queueing behavior: Jackson networks (previously described), useful to afford the Single Class Analysis, and BCMP networks amenable to tackle the Multi Class Analysis.

\subsection{Experimental setting}
We now provide some useful details about the developed testbed relying on a Clearwater architecture deployment. The architecture considered for our setting (mainly inspired to a similar deployment in \cite{cotroneo17bis}) consists of a hosting machine equipped with an Intel Xeon $4$-core $3.70$GHz, $32$ GB of RAM and a VMware-based hypervisor. 
We deploy three different VMs each of which hosts on top the containerized functionalities: P-CSCF (\textit{Bono}), S/I-CSCF \textit{(Sprout)}, HSS (\textit{Homestead} including \textit{Cassandra} DB for storing users information and profiles). Each VM is equipped with a (virtual) 2-Core CPU and 8 GB of RAM. A test VM based on a Linux distribution (mounted on a separate hardware) and connected via Gigabit Ethernet LAN acts as a stress node equipped with SIPp, an opensource tool amenable to be scripted for simulating workload.

The performed tests allowed us to simulate the initialization of $1000$ IMS sessions with a BHCA (Busy Hour Call Attempts) equal to $2.6$ per user (in line with values provided for VoLTE -  see \cite{tonse}). As a result, we derive an estimate of Registration Delay (RD), defined as the time interval between a Register message (originated from a caller UE) to the 200 OK message (sent back to caller from S-CSCF node when procedure ends correctly). This mean value amounts to about $30$ msec and is in line with standard RD values (see \cite{ltebook}). On the other hand, we carried out a more detailed analysis on a sample of $10$ IMS Register sessions (by means of network sniffer Wireshark) aimed at retrieving the mean time that each cIMS node spends in processing a request. This value can be interpreted as the mean service time ($1/\mu$) per node and is in the order of few milliseconds for each node. 
Table \ref{tab:params} summarizes the input parameters that we derive from the experimental analysis, whereby, for the case of SLF node, we consider values in line with its forwarding activity. In the case of routing probabilities ($p_1$, $p_2$, $p_3$), instead, we merely consider exemplary values that can be obviously tuned according to specific deployments.

\subsection{Single Class Analysis (Jackson framework)}

\begin{figure*}
	\centering
	\begin{subfigure}[t]{0.45\textwidth}
		\centering
		\includegraphics[width=7.6cm]{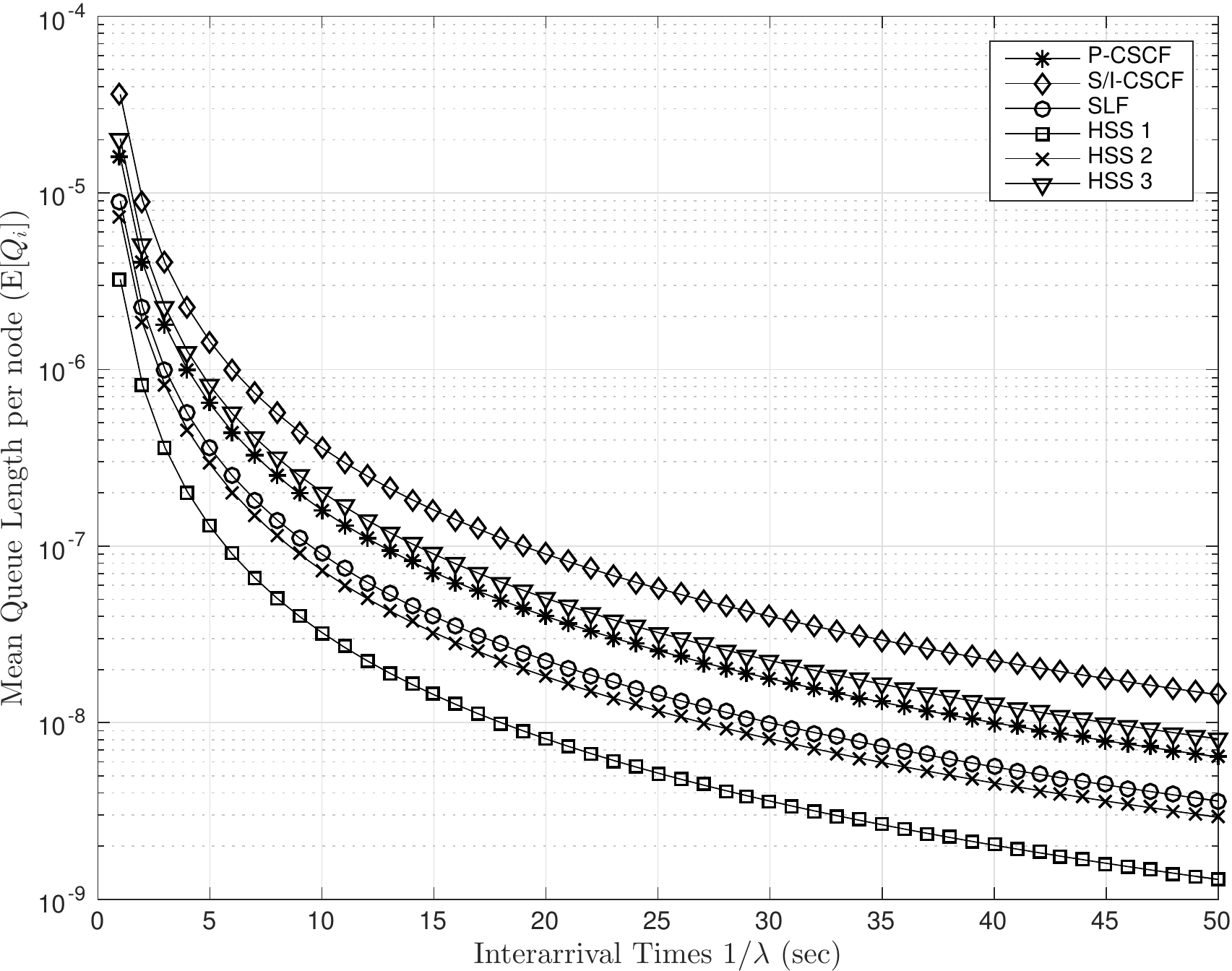}
		\caption{}
	\end{subfigure}
	\vspace{6.5mm}
	\hspace{15mm}
	\begin{subfigure}[t]{0.45\textwidth}
		\centering
		\includegraphics[width=7.6cm]{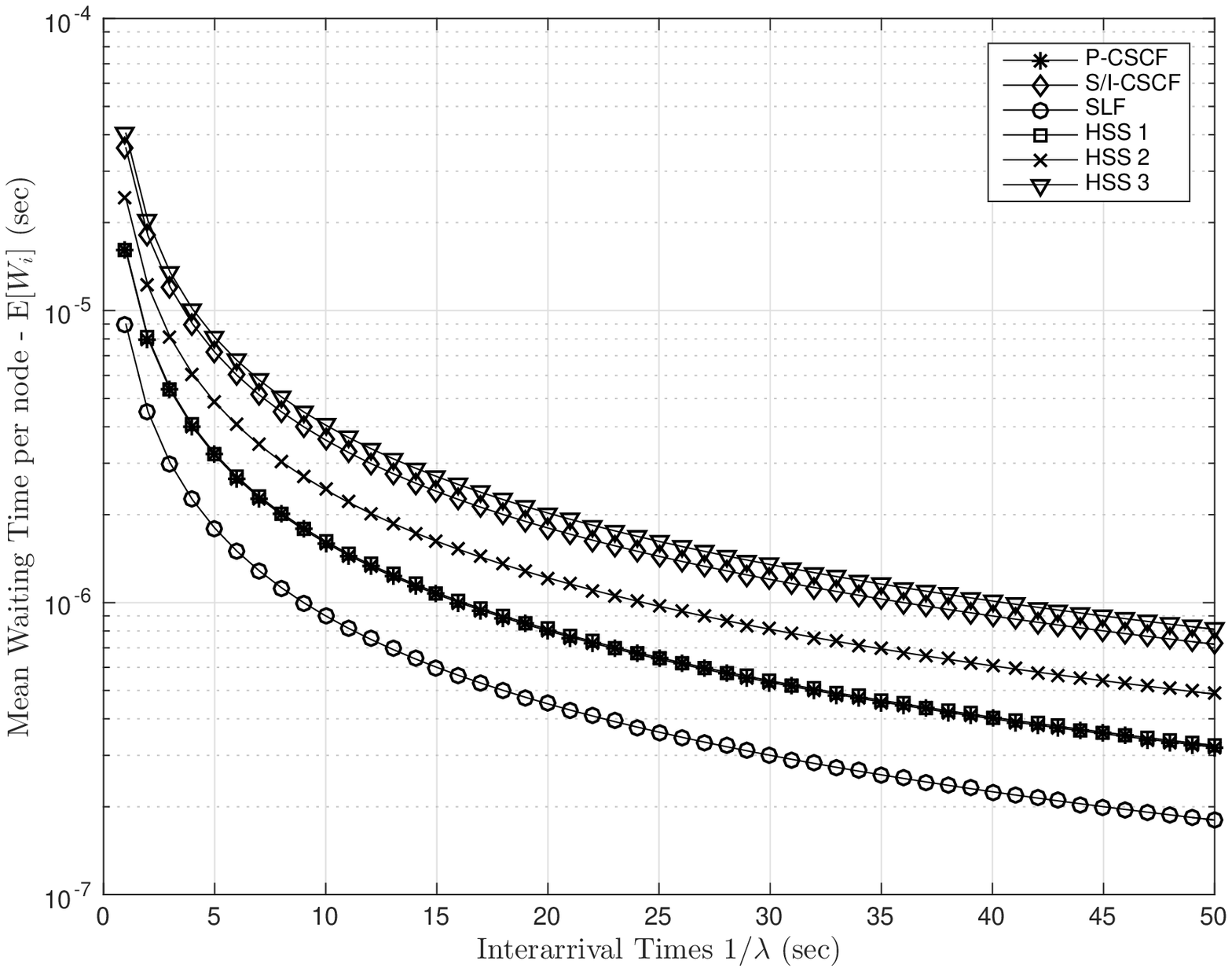}
		\caption{}
	\end{subfigure}
	\begin{subfigure}[t]{0.45\textwidth}
		\centering
		\includegraphics[width=7.6cm]{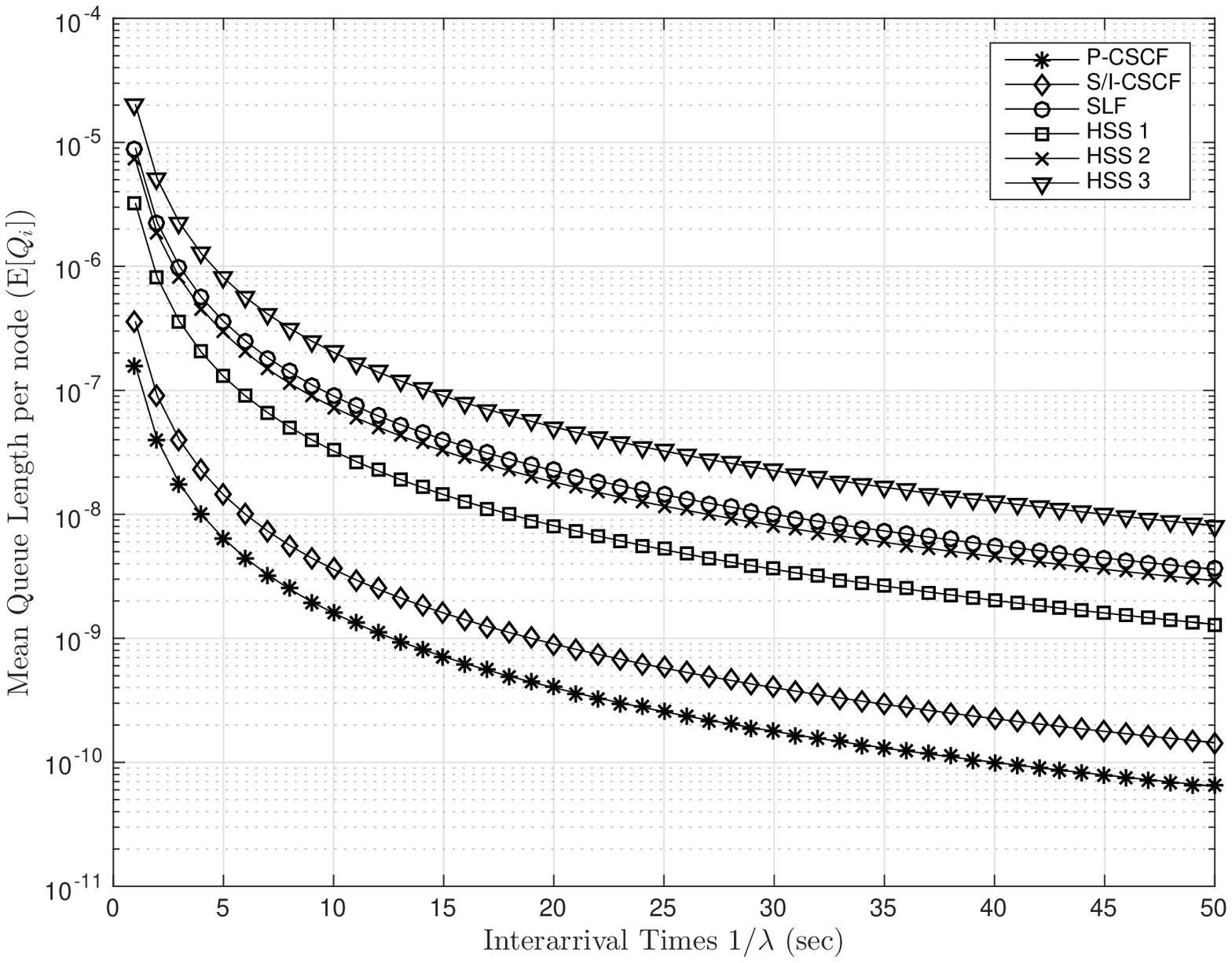}
		\caption{}
	\end{subfigure}
	\vspace{6.5mm}
	\hspace{15mm}
	\begin{subfigure}[t]{0.45\textwidth}
		\centering
		\includegraphics[width=7.6cm]{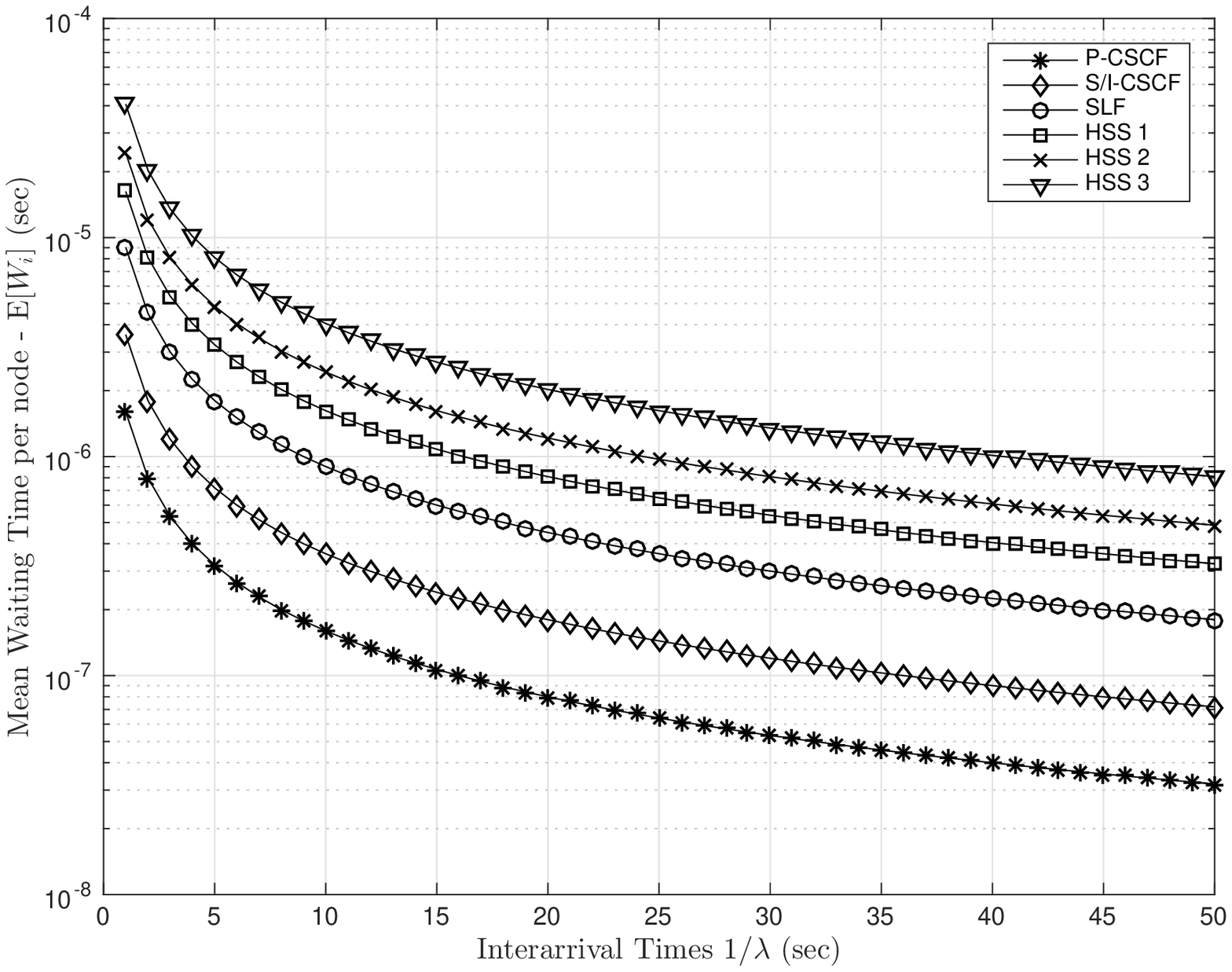}
		\caption{}
	\end{subfigure}
	\caption{Single class analysis. Mean Queue Length (a) and Mean Waiting Time (b) per node ($M/M/1$ model). Mean Queue Lenght (c) and Mean Waiting Time (d) per node ($M/M/1$ model per node excepting for P-CSCF and S-CSCF adopting $M/M/m$ model with $m$=10). }
	\label{fig:singleclassanal}
\end{figure*}

In this scenario, we consider the case of cIMS requests belonging to the same class by exploiting the properties of Jackson's theorem introduced in Section \ref{subs:imschain}.
Let us start analyzing the behavior of cIMS nodes arranged in a network queue fashion where a single class of requests is permitted. Simulations have been realized using the Qnetwork package \cite{marzolla} that allows representing the nodes interconnections by means of the routing matrix. The mean queue length $\mathbb{E}[Q_i]$ per node (accumulated across all visits) where external Poisson requests with rate $\lambda$ occur, can be expressed as

\beq
\mathbb{E}[Q_i]=\frac{\rho_i^2}{1-\rho_i},~~~ \rho_i=\frac{\lambda_i}{\mu_i},
\label{eq:mql}
\eeq
where intermediate arrival rates $\lambda_i$ can be derived from (\ref{eq:balance}).
As a general trend, Figure \ref{fig:singleclassanal}(a) reveals that, as inter-arrival times grow (corresponding in decreasing arrival rates), the mean queue length per node diminishes, as it was to be expected. Now, if we focus on specific nodes, from (\ref{eq:mql}) we can deduce that, for a fixed $\lambda_i$, $\mathbb{E}[Q_i]$ decreases as the service rate of $i$-th node increases. It is interesting to notice that this behavior seems to be violated by the three HSSs (in particular by HSS$_1$ and HSS$_2$) since they exhibit the lowest service rate (or the highest service time, according to the parameters provided in Table \ref{tab:params}). This phenomenon clearly depends on the routing probabilities that, according to (\ref{eq:balance}), act as weights for $\lambda_i$ terms and produce the global effect of reducing the mean queue length for HSS nodes. 

Let us now consider the mean waiting time per node $\mathbb{E}[W_i]$ (accumulated across all visits) that, by applying Little's theorem to (\ref{eq:mql}), can be expressed as

\beq
\mathbb{E}[W_i]= \frac{1}{\lambda_i} \mathbb{E}[Q_i] =  \frac{\rho_i}{\mu_i(1-\rho_i)},~~~ \rho_i=\frac{\lambda_i}{\mu_i}.
\label{eq:mwt}
\eeq

\begin{table}[t!]
	\caption {Input parameters} \label{tab:params}
	\resizebox{.48\textwidth}{!}{
		\begin{tabular}{c|c|c}
			\hline
			Parameter & Description & Value\\
			\hline
			\\[-8pt]
			$1/\lambda$ & outside arrival times & [1 50] sec \\ \hline
			$1/\mu_{P}$ & P-CSCF mean service time & 4$\cdot10^{-3}$ sec \\ \hline
			$1/\mu_{SI}$ & S/I-CSCF mean service time & 6$\cdot10^{-3}$ sec \\ \hline
			$1/\mu_{SLF}$ & SLF mean service time & 3$\cdot10^{-3}$ sec \\ \hline
			$1/\mu_{HSS_i}$ & HSS$_i$ mean service time ($i=1, 2, 3$) & 9$\cdot10^{-3}$ sec\\ \hline
			$p_1$ & routing probability to HSS$_1$ & 0.2  \\ \hline 
			$p_2$ & routing probability to HSS$_2$ & 0.3  \\ \hline 
			$p_3$ & routing probability to HSS$_3$ & 0.5  \\ \hline 
			\hline
		\end{tabular}}
	\end{table}

Figure \ref{fig:singleclassanal}(b) shows the mean waiting time per node. Also in this case the general trend is expected since, as inter-arrival times grow, the mean waiting time per node decreases. In other words, when arrival rates decrease, requests spend less time to be served in a node. As can be argued by (\ref{eq:mwt}), the behavior is similar to the one exhibited for $\mathbb{E}[Q_i]$, with the difference that the service time per node acts as a weight factor. As a result, the curves pertinent to HSS$_1$ and HSS$_2$ tend to grow due to the service time value.

In practice, when dealing with the container technology it is easy to replicate a software instance (e.g. a container functionality) with the aim of exploiting parallel resources. This case can be quickly embodied in the proposed queueing networks framework by admitting that nodes can be modeled as $M/M/m$ queues (remaining in the Jackson's theorem hypotheses) where $m$ represents the number of instances working in parallel, and where $\rho_i=\lambda_i/m_i \mu_i$. Let us assume to model only P-CSCF and S/I-CSCF in terms of $M/M/m$ queues. Figures \ref{fig:singleclassanal}(c) and \ref{fig:singleclassanal}(d) show, respectively, mean queue length and mean waiting time per node, when P-CSCF and S/I-CSCF are modeled as $M/M/10$ queues. For both cases, the overall effect is an expected downward curve scaling for P-CSCF and S/I-CSCF nodes, due to the scaling factor in the $\rho$ expression.   

Let us now focus on the mean response time of the overall cIMS system $\mathbb{E}[T]$, whose single contributions per nodes obey to (\ref{eq:tot_wait_time}). Figure \ref{fig:resp_time} shows the behavior of $\mathbb{E}[T]$ for different values of capacity factors introduced in the previous section. For the sake of simplicity, we denote by $\textbf{c}=[P, S, SLF, H_1, H_2, H_3]$ the vector of capacity factors associated to P-CSCF, S/I-CSCF, SLF, HSS$_i$ (i=1, 2, 3) nodes, respectively. The uppermost curve (denoted by triangular markers) represents a reference case since capacity factors amount to $1$ for each node. This means that nodes work at their nominal conditions with no extra ``power" added. The remaining three curves refer to different cases of capacity factors all summing to $18$, but differently distributed among nodes. For instance, when assigning more power to HSSs ($\textbf{c}=[1, 1, 1, 6 , 5, 4]$), $\mathbb{E}[T]$ decreases from a regime value\footnote{Regime value is intended as a value reached when $1/\lambda$ grows enough to produce negligible variations of $\mathbb{E}[T]$.} of about $22$ msec to about $15$ msec (curve with asterisk markers). This value further diminishes when capacity is differently allocated, by assigning extra power to P-CSCF, S/-CSCF, and SLF nodes, and by leaving HSSs to their nominal value (case $\textbf{c}=[6, 5, 4, 1 , 1, 1]$ and curve with diamond markers). Here, it is interesting to observe that this behavior comes from the fact that HSSs work at a nominal service time higher than one exhibited by remaining nodes. Thus, capacity factors have more effect when applied to P-CSCF, S/I-CSCF, and SLF. Finally, when the power is equally distributed among all nodes (case $\textbf{c}=[3, 3, 3, 3 , 3, 3]$ and curve with square markers), $\mathbb{E}[T]$ decreases below $8$ msec. Accordingly, the latter appears to be the more advantageous configuration (at the same capacity vectors) in case a provider would guarantee Service Level Agreements based on minimum response time of the system by having a fixed cost constraint. 

It is worth remarking that, for comparison purposes, all the curves have been represented on the same plot but, due to different scales, they appear to be flattened around the pertinent regime value. As a matter of fact, we propose a zoom of a part of the transient region ($1/\lambda \in [1,20]$) corresponding to the reference case (see inset pointed by red arrow), where it is possible to appreciate the correct decay of  $\mathbb{E}[T]$, as arrival times increase.    

The results obtained in Fig. \ref{fig:resp_time} can be also verified by means of an asymptotic bounds analysis, which is useful to derive upper and lower bounds for system throughput and mean response time, respectively \cite{den78}. 

\begin{figure}[t!]
	\centering
	\captionsetup{justification=centering}
	\includegraphics[scale=0.37,angle=90]{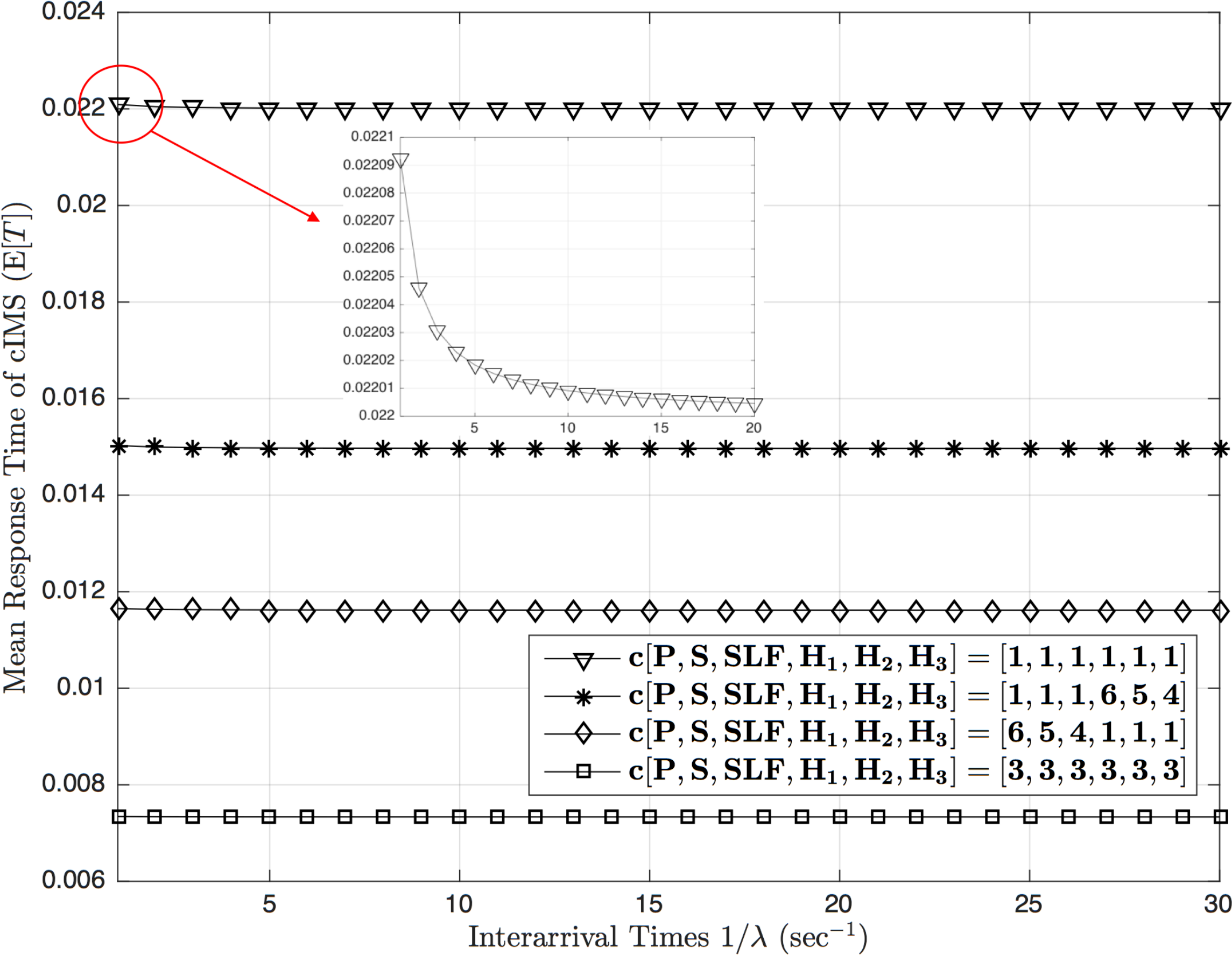}
	\caption{Mean Response Time of the overall cIMS chain for different capacity vectors.}
	\label{fig:resp_time}
\end{figure} 

Having been satisfied the needed condition for this analysis, namely that service rates must be independent of number of requests (at a node or in che cIMS), we define the relative utilization of node $i$ as the quantity $u_i=v_i/\mu_i$. With the assumption that waiting time of a request is zero (best case when there is no request blocked by other requests), and being $u_i$ the mean time a request spend being served at $i-$th node, the mean system response time is given by the sum of relative utilizations. Consequently, the lower (optimistic) bound on mean response time can be expressed as

\beq
\mathbb{E}[T] \geq \sum_{i=1}^{N} \frac{v_i}{\mu_i}.
\label{eq:bound}
\eeq
For the reference case, such bound amounts to $\mathbb{E}[T]=0.022$ sec that, as can be easily verified by inspecting the zoomed section in Fig. \ref{fig:resp_time}, corresponds to limiting value as the interarrival times grow. 
	
 \begin{figure}[t!]
 	\centering
 	\captionsetup{justification=centering}
 	\includegraphics[scale=0.32,angle=90]{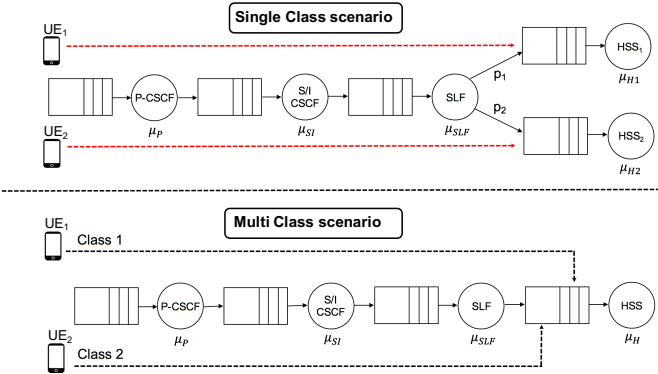}
 	\caption{Single Class scenario (uppermost panel) vs. Multi Class scenario (lowermost panel).}
 	\label{fig:multiclass}
 \end{figure} 

  \begin{figure*}
  	\centering
  	\begin{subfigure}[t]{0.4\textwidth}
  		\centering
  		\includegraphics[width=7.5cm]{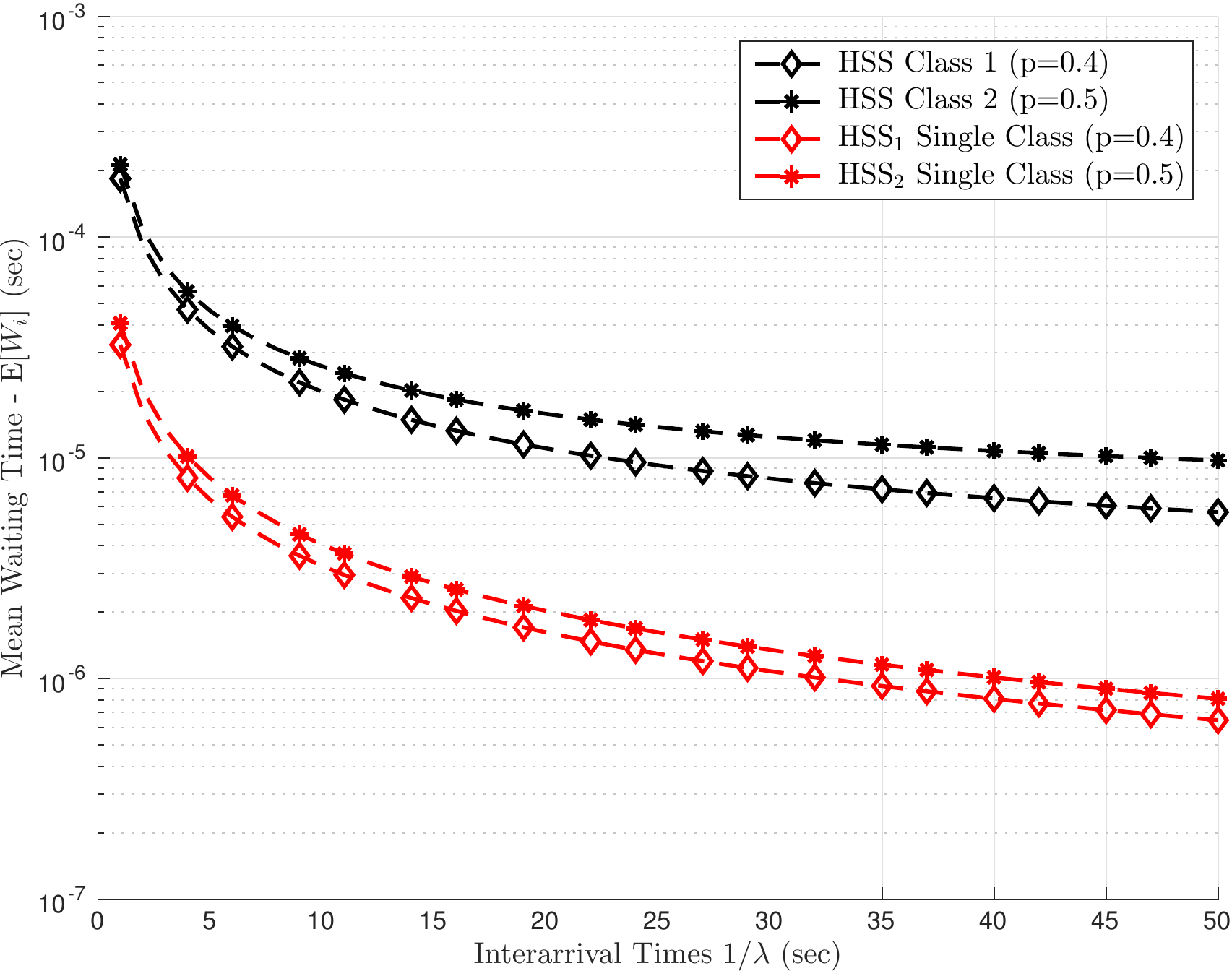}
  		\caption{}
  	\end{subfigure}
  	\vspace{6.5mm}
  	\hspace{15mm}
  	\begin{subfigure}[t]{0.4\textwidth}
  		\centering
  		\includegraphics[width=7.5cm]{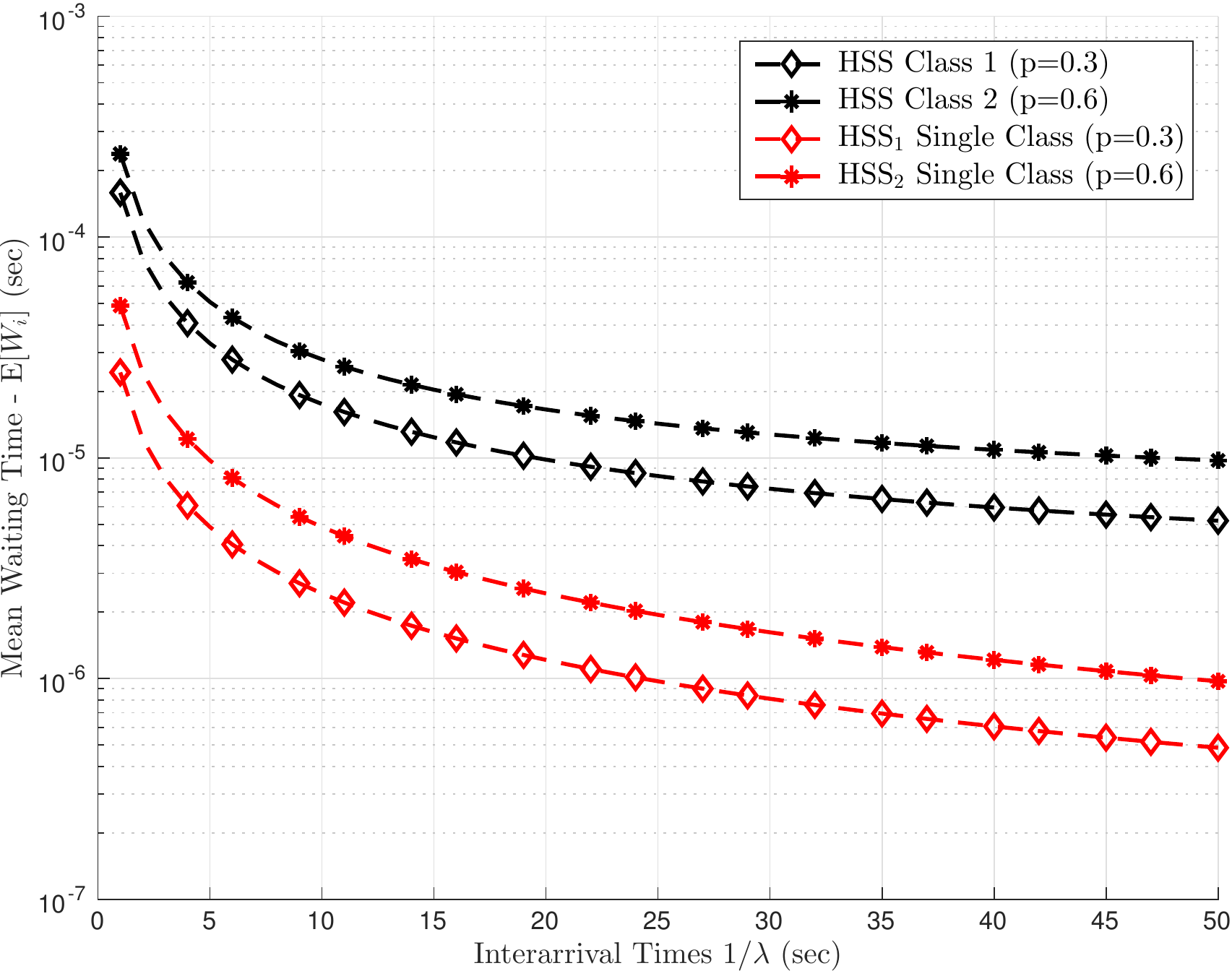}
  		\caption{}
  	\end{subfigure}
  	\begin{subfigure}[t]{0.4\textwidth}
  		\centering
  		\includegraphics[width=7.5cm]{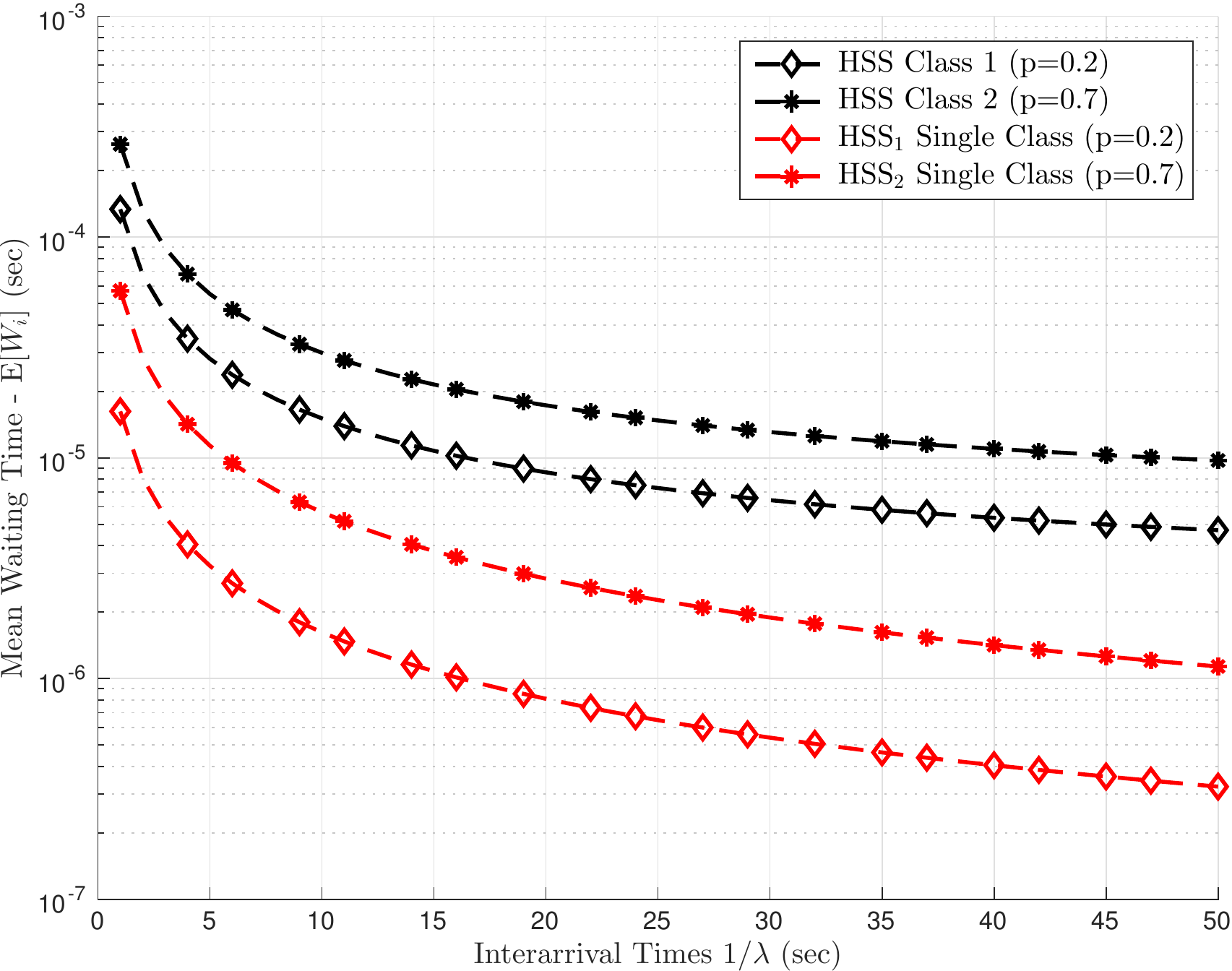}
  		\caption{}
  	\end{subfigure}
  	\vspace{6.5mm}
  	\hspace{15mm}
  	\begin{subfigure}[t]{0.4\textwidth}
  		\centering
  		\includegraphics[width=7.5cm]{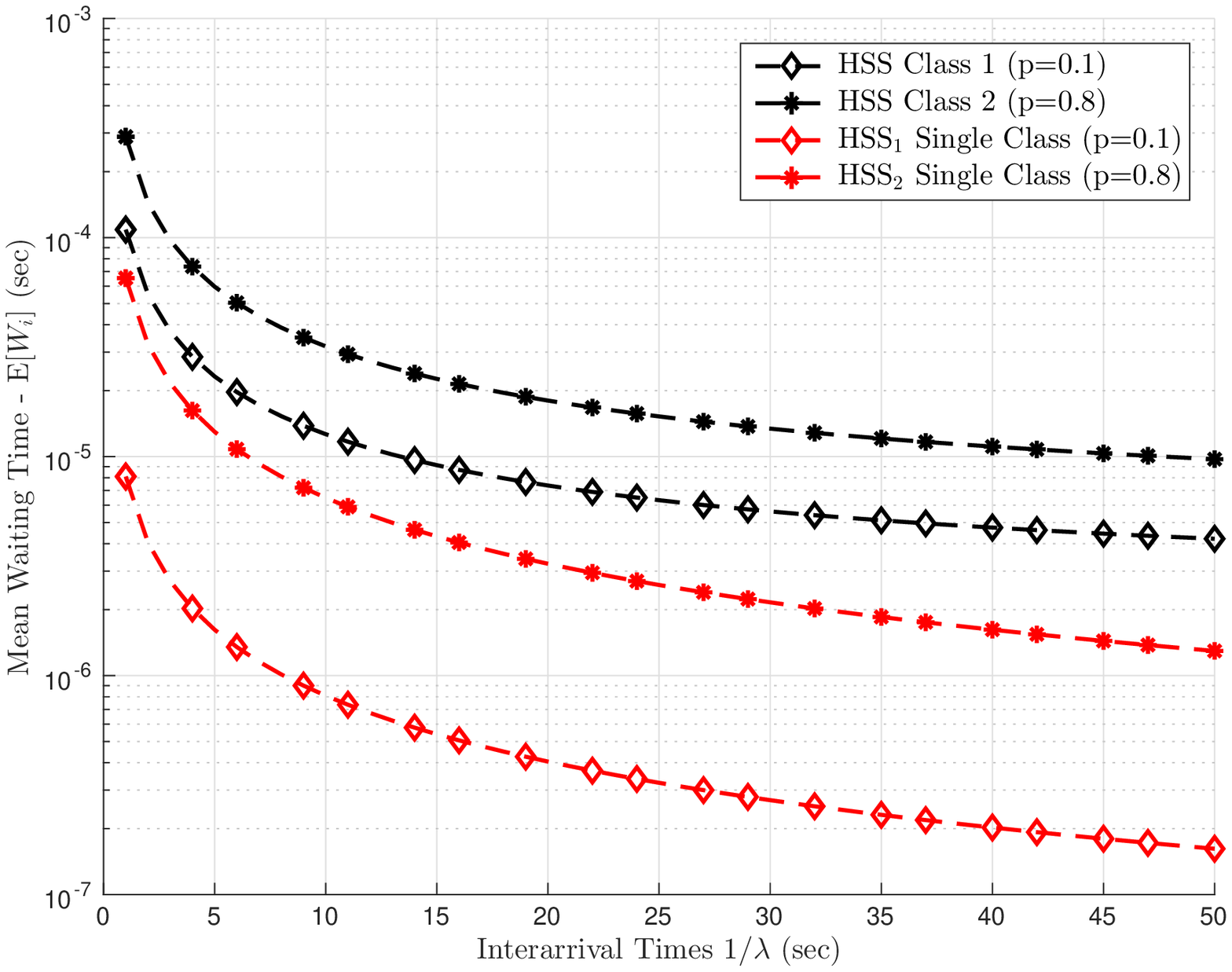}
  		\caption{}
  	\end{subfigure}
  	\caption{Comparison between Single Class and Multi Class schemes in terms of mean waiting time per node for different values of probabilities couples.}
  	\label{fig:singlemulticompare}
  \end{figure*}

\subsection{Multi Class Analysis (BCMP framework)}

In this second part of our performance assessment, we consider the possibility that cIMS requests can be differentiated per class. In fact, many operators often implement their SLAs by separating customers in classes (e.g. gold, silver, bronze) through different HSSs, being these latter designated to manage user profiles. 
Accordingly, it turns to be useful to introduce a variant to the Jackson's framework, known as BCMP networks (the acronym is simply including the initial of the authors). This technique allows taking into account different job classes and different queue disciplines at network nodes. Examples are: FCFS, where the job on top position is served first; and PS where each job in queue receives equal simultaneous service per class.

When there is no need to differentiate classes and to consider queueing policies beyond FCFS, the BCMP reduces to the Jackson framework. The product-form holds again for BCMP networks, and service time distributions (for some queueing policies) must admit a rational Laplace transform \cite{bcmp75}. 
By considering the existence of $l$ classes ($l=1,\dots,L$) of requests, (\ref{eq:balance}) becomes:
\beq
\lambda_{il}=\lambda \cdot p_{0,l} + \sum_{j=1}^{N} \sum_{l=1}^{L} \lambda_{jl} \cdot p_{jl,ir},
\label{eq:multibalance}
\eeq
where: $\lambda_{il}$ is the arrival rate of $l$-th class request to node $i$, $p_{0,l}$ is the probability that arriving requests belong to class $l$, and $p_{jl,ir}$ is the probability that a request belonging to class $l$ and managed by node $j$ acquires the class $r$ and is routed to node $i$. 

Similarly, it is possible to define the mean number of visits $v_{il}$ of a job belonging to the $l-$th class and at node $i$ as:
\beq
v_{il}=p_{0,l} + \sum_{j=1}^{N} \sum_{l=1}^{L} v_{jl} \cdot p_{jl,ir},
\label{eq:visits}
\eeq
with $v_{jl}=\lambda_{jl}/\lambda$. Let us also denote by $k_{il}$ the number of requests belonging to class $l$ at node $i$.
Steady-state probability for BCMP open networks (with load-independent arrival and service rates) admits the same formulation of (\ref{eq:margprobmm1}), but with different $k_i$ values depending on the queueing policy, and amounting to:

 \beqa
 \centering
 \left\{
 \begin{array}{l}
 	{\begin{array}{ll}
 			\hspace{-0.23cm} k_i=\sum_{l=1}^{L}k_{il} , \;\;\;\; \rho_i=\sum_{l=1}^{L} v_{il}\frac{\lambda_l}{\mu_i}   \;\;\;\;\;\;\;\;\; \textnormal{(FCFS nodes)}
 		\end{array}}
 		\\
 		\\
 			\hspace{-0.09cm} k_i=\sum_{l=1}^{L}k_{il} , \;\;\;\; \rho_i=\sum_{l=1}^{L} v_{il}\frac{\lambda_l}{\mu_{il}}   \;\;\;\;\;\;\;\;\; \textnormal{(PS nodes)}.
 	\end{array}
 	\right.
 	\label{eq:bcmpnodes}
 	\eeqa
\newline

\noindent Aimed at evaluating an exemplary multi class scenario, let us consider the case shown in Fig. \ref{fig:multiclass} where two schemes are compared. The uppermost panel shows a scheme implementing the single class scenario with user requests being probabilistically routed towards a specific HSS. On the contrary, the lowermost panel shows a scheme where a single HSS serves two different requests differentiated by means of classes. It is useful to highlight that all HSSs implement a FCFS policy.

\begin{figure*}
	\centering
	\begin{subfigure}[t]{0.4\textwidth}
		\centering
		\includegraphics[width=7.5cm]{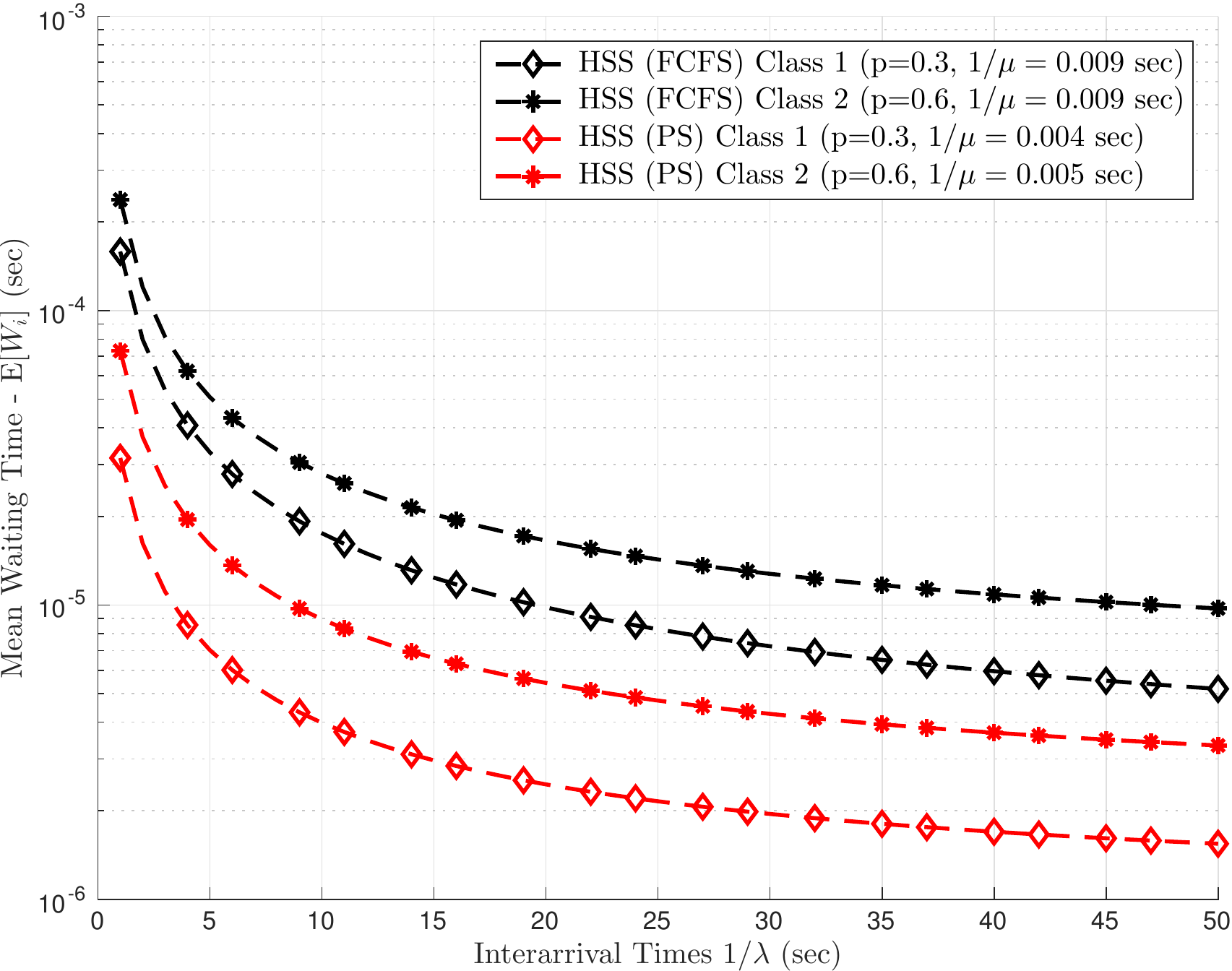}
		\caption{}
	\end{subfigure}
	\vspace{6.5mm}
	\hspace{15mm}
	\begin{subfigure}[t]{0.4\textwidth}
		\centering
		\includegraphics[width=7.5cm]{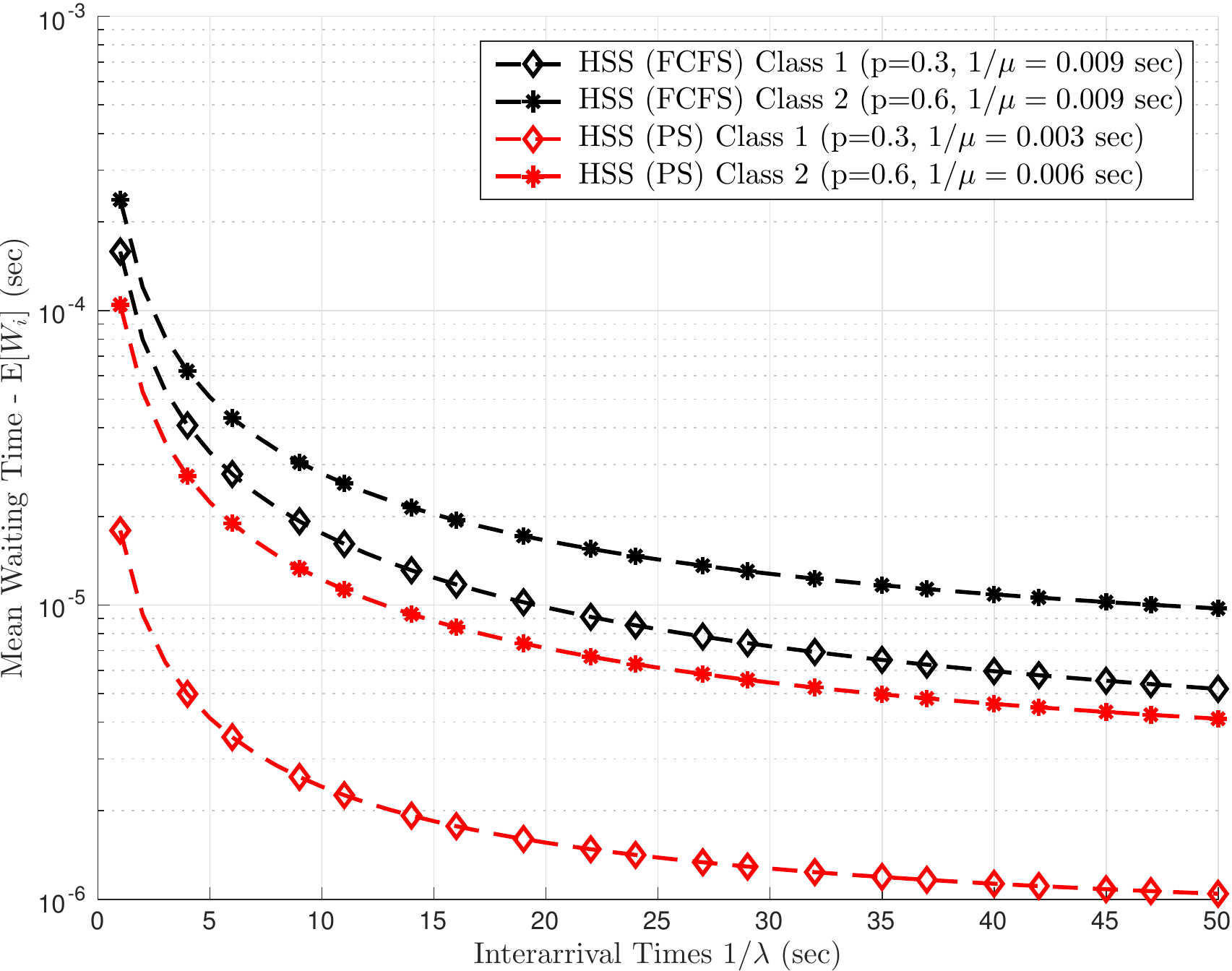}
		\caption{}
	\end{subfigure}
	\begin{subfigure}[t]{0.4\textwidth}
		\centering
		\includegraphics[width=7.5cm]{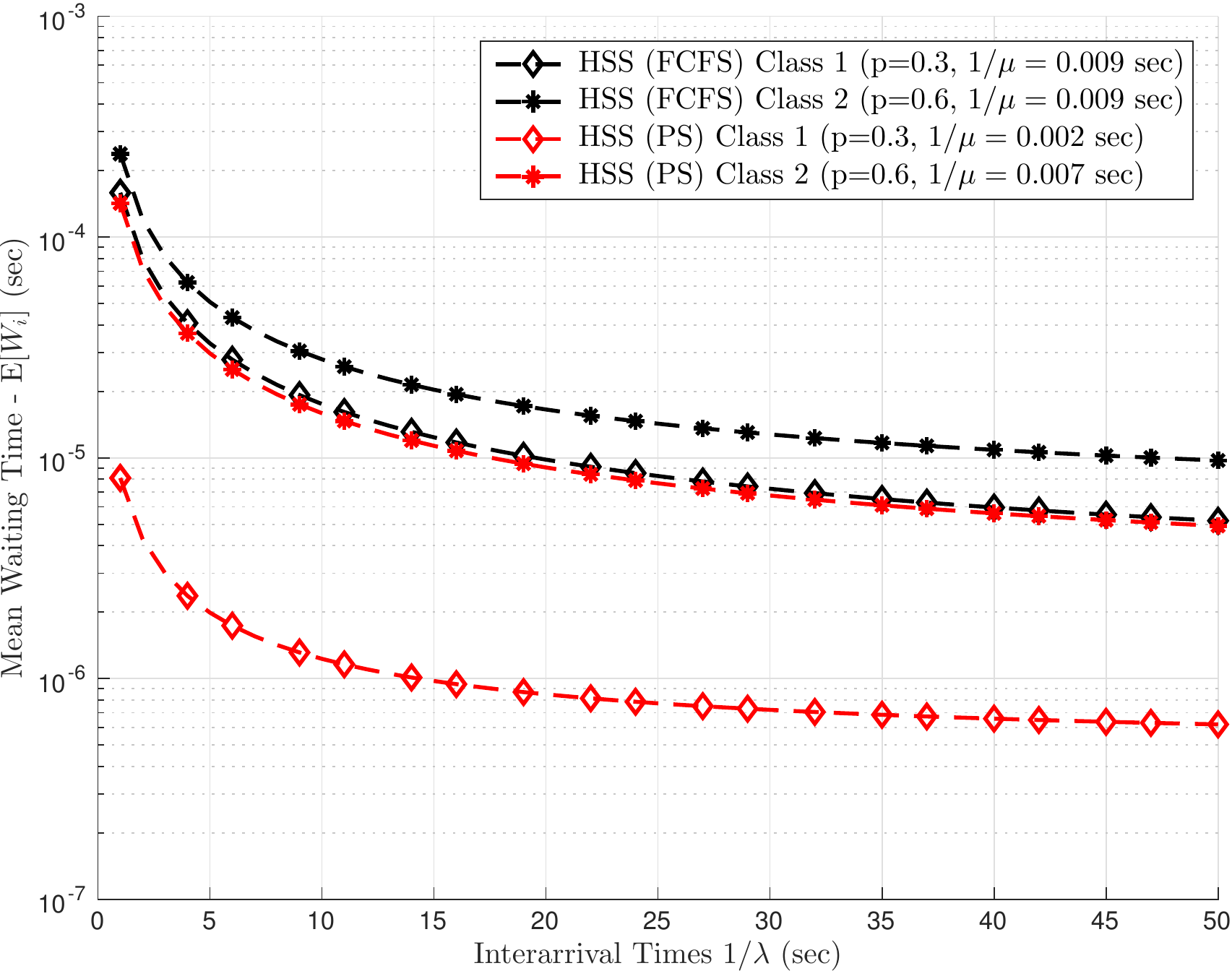}
		\caption{}
	\end{subfigure}
	\vspace{6.5mm}
	\hspace{15mm}
	\begin{subfigure}[t]{0.4\textwidth}
		\centering
		\includegraphics[width=7.5cm]{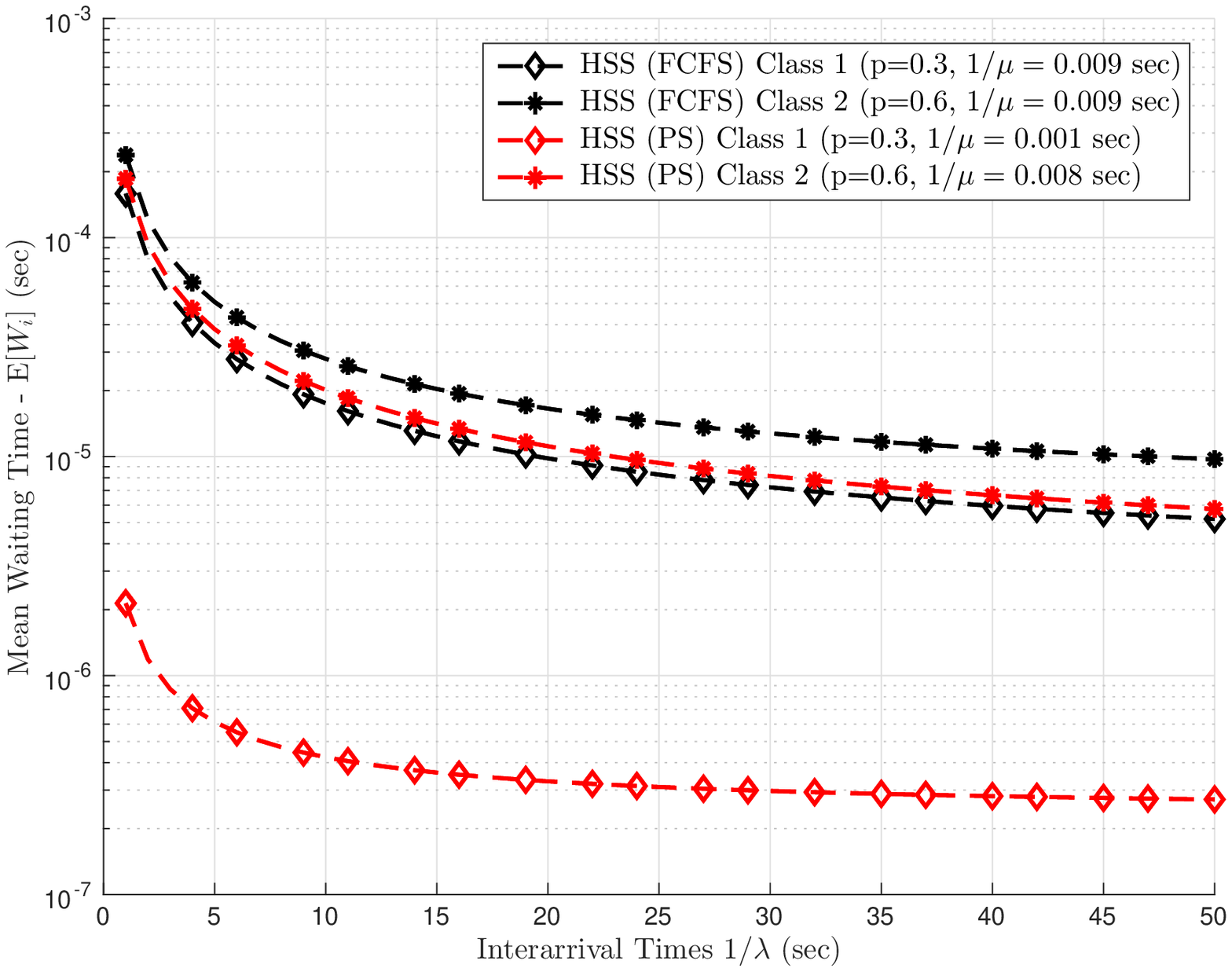}
		\caption{}
	\end{subfigure}
	\caption{Comparison between Multi Class schemes when FCFS and PS queueing policies are considered.}
	\label{fig:psfcfscompare}
\end{figure*}

Let us now compare the two cases when the probability of a request being routed to HSS$_1$ (respectively to HSS$_2$) in the single class scheme equals the probability that the single HSS receives requests belonging to Class $1$ (respectively to Class $2$) in the multi class scheme. The outcomes of this comparison are shown in the panel of Figs. \ref{fig:singlemulticompare}, where the system performance in terms of mean waiting time per HSS node is evaluated, while the service time is kept fixed to $0.009$ for HSS$_1$, HSS$_2$ and HSS. In all figures, red curves refer to the single class scheme (uppermost panel of Fig. (\ref{fig:multiclass})) where HSS$_1$ and HSS$_2$ nodes are queried with two different routing probabilities, whereas the black curves refer to the multi class scheme (lowermost panel of Fig. (\ref{fig:multiclass})). A single HSS node is queried with two probabilities of belonging to class $1$ or class $2$. Performing a pairwise comparison (e.g. HSS Class $1$ vs HSS$_1$), one can notice that $\mathbb{E}[W_i]$ is always lower in the case of single class scheme. Thus, the latter offers more guarantees in terms of latency, and the skew from the multi class scheme becomes more accentuated as the inter-arrival times grow. This is due to the fact that the single class scheme allows exploiting a dedicated HSS resource to manage requests' arrivals. 
On the other hand, a comparison performed between curves belonging to same setting (e.g. HSS Class $1$ vs HSS Class $2$) reveals that, as the probability gap grows (from Fig. \ref{fig:singlemulticompare}(a) to \ref{fig:singlemulticompare}(d)), the mean waiting time gap increases as well. Here, it is interesting to notice that the single class scheme is more adaptive (there is appreciable difference between red curves in the four depicted cases) due to the use of two independent HSS nodes. 
The resulting data could provide useful guidance for a network designer interested at evaluating trade-offs between latency constraints and resource consumption, with the aim to better differentiate SLAs. In practice, the single class setting offers more guarantees than the multi-class scheme in terms of mean waiting time, since it relies on dedicated resources per class. On the contrary, when deploying separate instances becomes costly (consider for instance the license cost per HSS instance), the multi-class solution can be preferable, although at the cost of increased latency.  
As a further analysis, we consider the behavior of a multi-class scheme when the single HSS implements two different queueing policies: FCFS and PS. According to the BCMP framework, the former has to be implemented by considering the same service rate for each class, whereas the latter admits different service rates per class. In line with such indications, we outline some results in the panel of Figs. \ref{fig:psfcfscompare}. Black curves present the multi class case where HSS implements FCFS policy with a fixed service time of $0.009$ and a fixed couple of probabilities per class ($0.3 / 0.6$). On the contrary, red curves refer to the multi-class case where HSS implements PS policy with varying service time per class (the sum amounts to $0.009$) and with the same fixed couple of probabilities per class. 
As a general trend, one can recognize that the PS queueing policy offers better results than FCFS in terms of mean waiting time spent at a node. This is due to a different management of service resources obeying the following behavior: when \textit{r} requests arrive to HSS node, they are simultaneously served with each receiving $1/r$ of the service capacity. Moreover, it is interesting to notice that the PS policy allows a more elastic management than the one offered by FCFS, since it is possible to benefit from the a different allocation of service time per class. In a sense, PS policy exhibits a similar behavior observed in the single class setting with the presence of two separate HSSs. This is due to the possibility of dedicating a ``sliced" service time per class taking into account, at the same time, only one deployed HSS.

\section{Concluding Remarks}
\label{sec:concl}

Today, novel telco architectures (often marketed as $5$G networks) deeply embrace the opportunities offered by virtualized and containerized  environments, since they provide a priceless flexibility in resources managing along with a valuable cost saving. An exemplary case of this marriage is offered by service chains, namely, infrastructures composed of virtualized/containerized nodes traversed in a predetermined fashion to offer a desired service. In line with this nuance, the IP Multimedia Subsystem (IMS) can be interpreted as a particular realization of a service chain. 

In this work we characterize, from a statistical perspective, a service chain represented by a container-based version of the IMS infrastructure, referred to as cIMS. 
We adopt the queueing networks methodology to characterize, as accurately as possible, the mutual interconnections among nodes that, by exhibiting different behaviors, influence the performance metrics of the whole chain (e.g. mean waiting time, mean queue length). 
During this modeling step, we also tackle the case of bulk arrivals at P-CSCF node which leads to a more general version of the
Pollaczek-Khinchin formula.

Then, we adapt and nestle the cIMS model into the so-called open Jackson framework by leveraging the properties of \textit{product-form} networks in order to evaluate the cIMS performance under the hypothesis of single class jobs. Again, we define and solve an optimization problem helpful to highlight the dependencies of cIMS response time from capacity constraints, and to derive the best deployment satisfying a desired cost/resource tradeoff. 

Finally, we introduce the BCMP formalism aimed at extending our assessment to network queues with jobs belonging to different service classes and with nodes implementing different queueing policies. As a result, critical comparisons (based on single/multi class scenarios and on different queueing policies) are proposed, with the aim of pinpointing the optimal cIMS deployments that satisfy the network operators demands.
In this way, the theoretical part is supported by an experimental assessment realized through \textit{Clearwater}, an open source platform that allowed us to deploy a containerized IMS infrastructure, and to derive realistic data useful to strengthen our models.
The obtained results offer useful indications for service providers interested in guaranteeing competitive SLAs across different deployment scenarios, and to limit the resource consumption at the same time. Through the proposed assessment, for instance, a service provider could: $i)$ decide how and where to allocate resources, based on their percentage utilization (e.g. differentiated HSSs); $ii)$ adopt the single class scheme if interested in higher performance in terms of mean waiting time (e.g. for gold class customers); $iii)$ implement a Processor Sharing queueing policy if attracted by a more elastic management (e.g. in case of a multi-tenant architecture).

There are different directions in which the proposed research could be extended in the future. As regards the theoretical part, it will be interesting to analyze the effects of considering redundant instances per cIMS node in order to guarantee the so-called \textit{five nines} or high-availability requirements, which are more than ever required in modern telco deployments. 

From an application level perspective, the proposed characterization may be further tailored across different architectures that exhibit a service chain structure, as often occurs in telco systems. A valuable example is offered by radio access networks, where, traversing a certain number of nodes (e.g. e-node B, Radio Network Controller, etc.) in particular ways could trigger queueing networks issues.

\ifCLASSOPTIONcaptionsoff
  \newpage
\fi

\vspace{200pt}
{\protect\color{black}
\begin{IEEEbiography}[{\includegraphics[width=1in,height=1.15in,clip,keepaspectratio]{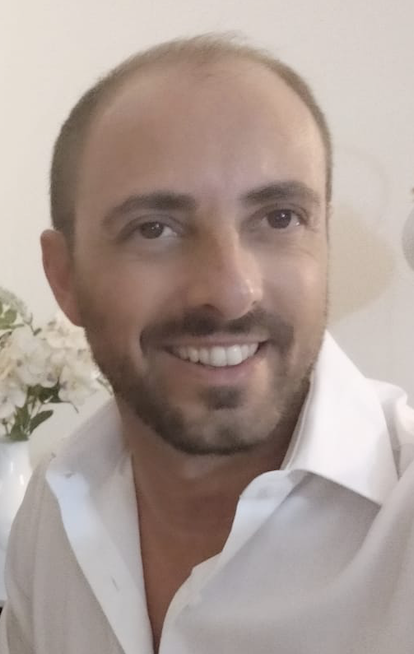}}] 
	{Mario Di Mauro} received the Laurea degree in electronic engineering from the University of Salerno (Italy) in 2005, the M.S. degree in networking from the University of L'Aquila (Italy) jointly with the Telecom Italia Centre in 2006, and the PhD. degree in information engineering in 2018 from University of Salerno.
	He was a Research Engineer with CoRiTel (Research Consortium on Telecommunications, led by Ericsson Laboratory, Italy) and then a Research Fellow with University of Salerno. He has authored several scientific papers, and holds a patent on a telecommunication aid for impaired people. His main fields of interest include: network performance, network security and availability, data analysis for telecommunication infrastructures.
\end{IEEEbiography}
}
\vspace{-200pt}
{\protect\color{black}
\begin{IEEEbiography}[{\includegraphics[width=1in,height=1.15in,clip,keepaspectratio]{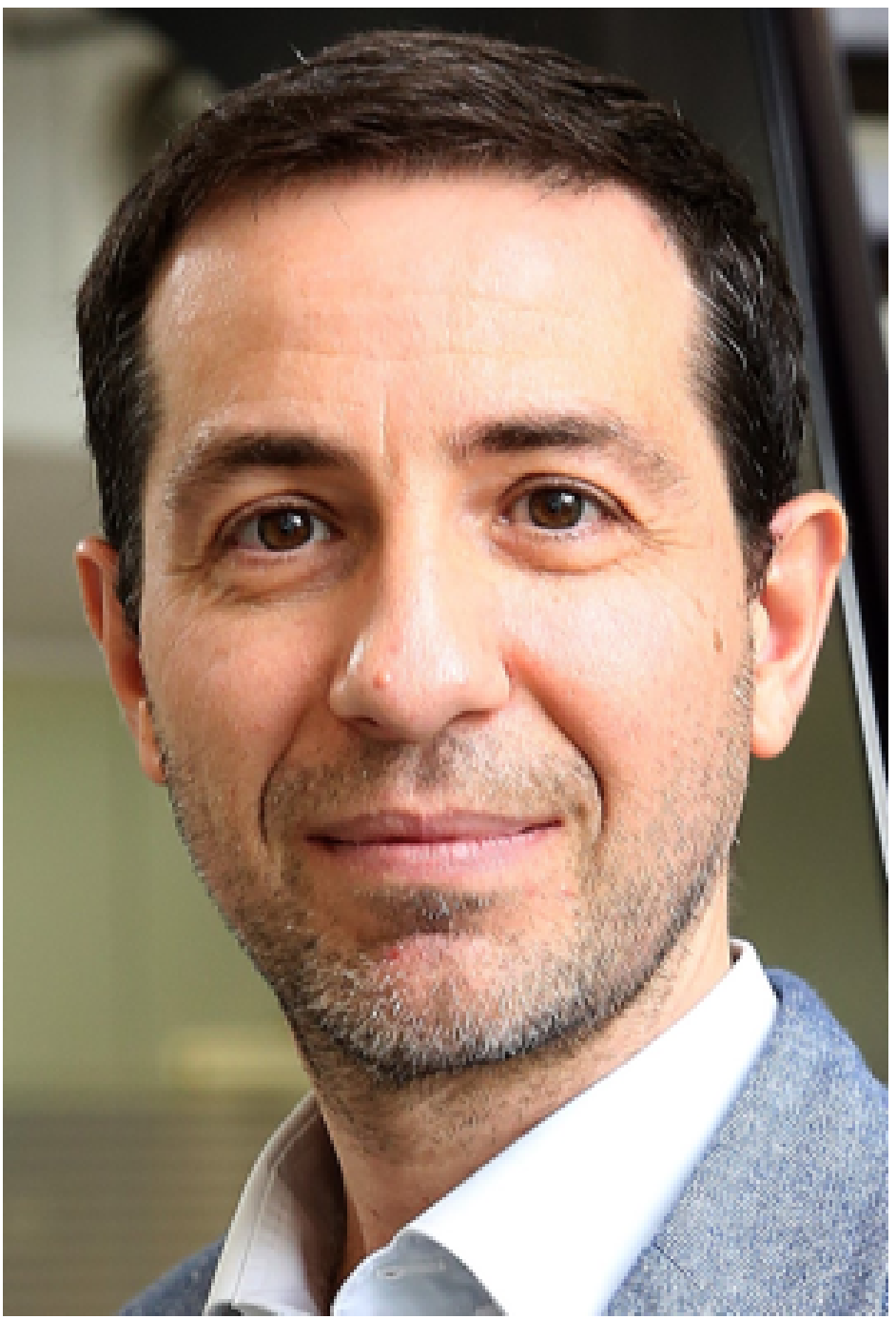}}] 
	{Antonio Liotta} is Professor of Data Science and Intelligent Systems at Edinburgh Napier University, where is coordinating multi-disciplinary programs in Data Science and Artificial Intelligent across the university. He has recently been awarded the prestigious "1000 Talents" fellowship in China, where he is the founding director of the Joint Intellisensing Lab and holds a Visiting Professorship at Shanghai Ocean University. Previously, he was Professor of Data Science and the founding director of the Data Science Research Centre, University of Derby, UK. He was leading all university-wide research, educational, and infrastructure programs in data science and artificial intelligence. His team is at the forefront of influential research in data science and artificial intelligence, specifically in the context of Smart Cities, Internet of Things, and smart sensing. He is renowned for his contributions to miniaturized machine learning, particularly in the context of the Internet of Things. He has led the international team that has recently made a breakthrough in artificial neural networks, using network science to accelerate the training process. Antonio is the Editor-in-Chief of the Springer Internet of Things book series; associate editor of the Journals JNSM, IJNM, JMM, and IF; and editorial board member of 6 more journals.
\end{IEEEbiography}}

\end{document}